\title{Homogeneous Algebraic Complexity Theory and Algebraic Formulas} %
\author{Pranjal Dutta}{School of Computing, National University of Singapore (NUS), Singapore \and \url{https://sites.google.com/view/pduttashomepage}}{pranjal@nus.edu.sg}{https://orcid.org/0000-0001-9137-9025}{Funded under the project ``Foundation of Lattice-based Cryptography'', by NUS-NCS Joint Laboratory for Cyber Security.}
\author{Fulvio Gesmundo}{Institut de Math{\'e}matiques de Toulouse, Universit{\'e} Paul Sabatier, Toulouse, France \and \url{https://fulges.github.io/}}{fgesmund@math.univ-toulouse.fr}{https://orcid.org/0000-0001-6402-021X}{}
\author{Christian Ikenmeyer}{University of Warwick, Warwick, UK  \and \url{https://www.dcs.warwick.ac.uk/~u2270030/}}{christian.ikenmeyer@warwick.ac.uk}{https://orcid.org/0000-0003-4654-177X}{Supported by EPSRC grant EP/W014882/1.}
\author{Gorav Jindal}{Max Planck Institute for Software Systems, Saarbr{\"u}cken, Germany \and \url{https://goravjindal.github.io/}}{gjindal@mpi-sws.org}{https://orcid.org/0000-0002-9749-5032}{}
\author{Vladimir Lysikov}{Ruhr-Universit\"at Bochum, Bochum, Germany \and \url{https://qi.rub.de/lysikov}}{Vladimir.Lysikov@ruhr-uni-bochum.de}{https://orcid.org/0000-0002-7816-6524}{Part of the work was was done while V.L. was affiliated with the QMATH Centre, University of Copenhagen. V.L. acknowledges financial support from VILLUM FONDEN via the QMATH Centre of Excellence (Grant No. 10059) and the European Union (ERC Grant Agreements 818761 and 101040907).
Views and opinions expressed are however those of the author(s) only and do not necessarily reflect those of the European Union or the European Research Council Executive Agency. Neither the European Union nor the granting authority can be held responsible for them.}
\authorrunning{P. Dutta, F. Gesmundo, C. Ikenmeyer, G. Jindal, and V. Lysikov} %
\keywords{Homogeneous polynomials, Waring rank, Arithmetic formulas, Border complexity, Geometric Complexity theory, Symmetric polynomials} %
\newcommand{\bbC}{\mathbb{C}}
\newcommand{\frakS}{\mathfrak{S}}
\newcommand{\IN}{\mathbb{N}}%
\newcommand{\IS}{\mathbb{S}}%
\newcommand{\IC}{\mathbb{C}}%
\newcommand{\eps}{\epsilon}%
\newcommand{\Pb}{\mathcal{B}}
\newcommand{\GL}{\operatorname{GL}}
\newcommand{\idthree}{\mathrm{id}_3}
\newcommand{\idtwo}{\mathrm{id}_2}
\newcommand{\vvirg}{,\ldots,}
\newcommand{\WR}{\mathsf{WR}}
\newcommand{\bwr}{\underline{\mathsf{WR}}}
\newcommand{\bcr}{\underline{\mathsf{CR}}}
\newcommand{\IMM}{\textup{\textsf{IMM}}}
\newcommand{\la}{\lambda}
\newcommand{\nvar}{\mathsf{nvar}}
\newcommand{\poly}{\mathsf{poly}}%
\newcommand{\polylog}{\mathsf{polylog}}%
\newcommand{\VBP}{\mathsf{VBP}}
\newcommand{\VQBP}{\mathsf{VQBP}}
\newcommand{\VP}{\mathsf{VP}}
\newcommand{\VTP}{\mathsf{V3P}}
\newcommand{\VQP}{\mathsf{VQP}}
\newcommand{\VQTP}{\mathsf{VQ3P}}
\newcommand{\VTF}{\mathsf{V3F}}
\newcommand{\VQTF}{\mathsf{VQ3F}}
\newcommand{\VNP}{\mathsf{VNP}}
\newcommand{\VF}{\mathsf{VF}}
\newcommand{\VQF}{\mathsf{VQF}}
\newcommand{\per}{\textup{per}}
\renewcommand{\P}{\mathsf{P}}
\newcommand{\NP}{\mathsf{NP}}
\newcommand{\nc}{\textup{\textsf{nc}}}
\newcommand{\CR}{\mathsf{CR}}
\renewcommand{\det}{\textup{det}}
\newcommand{\sgn}{\textup{sgn}}
\let\oldref\ref
\let\ref\cref
\renewcommand{\eqref}[1]{(\oldref{#1})}
\begin{document}

\maketitle

\begin{abstract}
  We study algebraic complexity classes and their complete polynomials under \emph{homogeneous linear} projections, not just under the usual affine linear projections that were originally introduced by Valiant in 1979. These reductions are weaker yet more natural from a geometric complexity theory (GCT) standpoint, because the corresponding orbit closure formulations do not require the padding of polynomials.
We give the \emph{first} complete polynomials for $\VF$, the class of sequences of polynomials that admit small algebraic formulas, under homogeneous linear projections: The sum of the entries of the non-commutative elementary symmetric polynomial in 3 by 3 matrices of homogeneous linear forms.

Even simpler variants of the elementary symmetric polynomial are hard for the topological closure of a large subclass of $\VF$: 
the sum of the entries of the non-commutative elementary symmetric polynomial in 2 by 2 matrices of homogeneous linear forms, and homogeneous variants of the continuant polynomial (Bringmann, Ikenmeyer, Zuiddam, JACM '18). This requires a careful study of circuits with arity-3 product gates.
\end{abstract}

\section{Motivation: Geometric Complexity Theory and Padding}
Geometric Complexity Theory (GCT) is an approach towards proving algebraic variants of the $\P \ne \NP$ conjecture using algebraic geometry and representation theory~\cite{MS01,MS08}.
Let $\det_n:=\Sigma_{\sigma \in \frakS_n}\sgn(\pi)\prod_{i=1}^n x_{i,\sigma(i)}$ be the determinant polynomial, and let $\per_m:=\Sigma_{\sigma \in \frakS_m}\Pi_{i=1}^m x_{i,\sigma(i)}$ be the permanent polynomial.
An algebraic version of the $\P \ne \NP$ conjecture, often called Valiant's \emph{determinant vs.~permanent} conjecture, states that the smallest
size of a matrix $A$ whose entries are affine linear polynomials such that $\det(A)=\per_m$, is {\em not polynomially} bounded in $m$. Mulmuley and Sohoni strengthened the conjecture by allowing the permanent to be approximated arbitrarily closely coefficientwise instead of being computed exactly.

The Mulmuley–Sohoni conjecture can be stated in terms of group orbit closures as $\ell^{n-m}\per_m \not \in \overline{\GL_{n^2}\det_{n}}$, if $n=\poly(m)$; here $\GL_{n^2}:=\GL(\IC^{n\times n})$ acts on the space
of homogeneous degree $n$ polynomials in $n^2$ variables by (invertible) linear transformations of the variables\footnote{For a homogeneous polynomial $p$ and $g\in\GL_{n^2}$ define the homogeneous polynomial $gp$ via $(gp)(\vv x) := p(g^t \vv x)$. The orbit is defined as $\GL_{n^2}p := \{gp \mid g \in \GL_{n^2}\}$.}, $\ell$ is some homogeneous linear polynomial (one can assume $\ell:=x_{1,1}$),
and the closure can be taken equivalently in the Zariski or the Euclidean topology, see e.g.~\cite[AI.7.2 Folgerung]{Kra85}. %
The polynomial $\ell^{m-n}\per_n$ is called the `padded permanent', and the phenomenon of multiplying with a power of a linear form is called \emph{padding}.
Note here that the action of $\GL_{n^2}$ replaces variables by \emph{homogeneous} linear polynomials. One could formulate this setup without padding, but then the reductive group $\GL_{n^2}$ would have to be replaced by the general affine group (see e.g.~\cite{MS21}), which is \emph{not} a reductive group.
For reductive groups, every representation decomposes into a direct sum of irreducible representations.
This is important for the representation theoretic attack proposed in \cite{MS01,MS08}, hence the padding is introduced in those papers.
The idea is that $\ell^{n-m}\per_m \in \overline{\GL_{n^2}\det_{n}}$ if and only if
$\overline{\GL_{n^2}\ell^{n-m}\per_m} \subseteq \overline{\GL_{n^2}\det_{n}}$.
Such an inclusion induces a $\GL_{n^2}$-equivariant surjection between the coordinate rings and between their homogeneous degree $\delta$ components, see e.g.~\cite{BLMW11}:
$
\IC[\overline{\GL_{n^2}\det_{n}}]_\delta \;\twoheadrightarrow \;\IC[\overline{\GL_{n^2}\ell^{n-m}\per_m}]_\delta.
$
Now, since the group $\GL_{n^2}$ is reductive, both sides decompose into irreducible representations of $\GL_{n^2}$:
\[
\underbrace{\IC[\overline{\GL_{n^2}\det_{n}}]_\delta}_{=\bigoplus_\la d_\la V_\la} \;\twoheadrightarrow\;\underbrace{\IC[\overline{\GL_{n^2}\ell^{n-m}\per_m}]_\delta}_{=\bigoplus_\la p_\la V_\la}\;,
\]
where $\la$ is a non-increasing list of $n^2$ many nonnegative integers,
and $V_\la$ is the irreducible $\GL_{n^2}$ representation of type $\la$.
Schur's lemma (see e.g.~\cite{FH91}) implies that $\forall \la : d_\la \geq p_\la$.
A $\la$ with $d_\la < p_\la$ is called a \emph{multiplicity obstruction}. If additionally we have that $d_\la=0$, then $\la$ is called an {\em occurrence obstruction}.
Issues with the padding were known from the beginning, and machinery to carry over information from $\IC[\overline{\GL_{m^2}\per_m}]$ to $\IC[\overline{\GL_{n^2}\ell^{n-m}\per_m}]$ was discussed, see e.g.~\cite{BLMW11}.
The impact of the padding on $\la$ was first highlighted by Kadish and Landsberg~\cite{KL14}, where they use the padding to classify a large class of $\la$ as \emph{not useful}. This was later strengthened in \cite{IP17,BIP19}, where it was shown that all relevant $\la$ have strictly positive $d_\la$, so that occurrence obstructions are not sufficient to prove Mulmuley and Sohoni’s conjecture.
This is known as the occurrence obstruction no-go result.

However, the padding can be removed by replacing $\det_n$ by the iterated matrix multiplication polynomial in $2n+n^2(d-2)$ variables:
\[
\IMM_{n,d} :=
\left(\begin{smallmatrix}x_{1,1,1} & x_{1,2,1} & \cdots & x_{1,n,1}\end{smallmatrix}\right)
\left(\begin{smallmatrix}
x_{1,1,2} &    \cdots   &   x_{1,n,2}     \\
  \vdots    & \ddots &   \vdots     \\
  x_{n,1,2}    &   \cdots    & x_{n,n,2}
\end{smallmatrix}\right)
\cdots
\left(\begin{smallmatrix}
x_{1,1,d-1} &    \cdots   &   x_{1,n,d-1}     \\
  \vdots    & \ddots &   \vdots     \\
  x_{n,1,d-1}    &   \cdots    & x_{n,n,d-1}
\end{smallmatrix}\right)
\left(\begin{smallmatrix}
x_{1,1,d} \\
  \vdots\\
  x_{n,1,d}
\end{smallmatrix}\right).
\]
Again, the task is to show that a surjection cannot exist:
\[
\underbrace{\IC[\overline{\GL_{2n+n^2(d-2)}\IMM_{n,d}}]_\delta}_{=\bigoplus_\la i_\la V_\la} \;\twoheadrightarrow\; \underbrace{\IC[\overline{\GL_{2n+n^2(d-2)}\per_d}]_\delta}_{=\bigoplus_\la j_\la V_\la}.
\]
Analogously to $d_\la$ vs $p_\la$, one searches for $\la$ with $i_\la< j_\la$.
In fact, it is known that the $j_\la$ can be determined independently of $n$ via inheritance theorems (see \cite{Ike13}):
$\IC[\overline{\GL_{d^2}\per_d}]_\delta =\bigoplus_\la j_\la V_\la$.
There are no no-go results known for this approach, but no strong equations vanishing on the orbit closure of $\IMM$ have been found so far.

Our main contribution in this paper is the discovery of new natural polynomials that serve as much simpler replacements for $\IMM$, which are still powerful enough to imply variants of Valiant's conjecture, see \S\oldref{subsec:mainresults}.

\section{Algebraic Complexity Theory}
A sequence of natural numbers $m = (m_n)_{n\in\IN}$ is called \emph{polynomially bounded} if there exists a univariate polynomial~$t$ such that $\forall n \in \IN: m_n\leq t(n)$.
Let $\Pb$ denote the set of all polynomially bounded sequences.
Let $\IS := \IC[x_1,x_2,\ldots]$ denote the set of all polynomials, and let $\IS_d$ denote the vector space of all homogeneous degree $d$ polynomials (including the zero polynomial).
We sometimes use the notation $n\mapsto a(n)$ to denote the function $a$, for example $n\mapsto n$ is the identity map.
For a sequence $g\in\IS^\IN$ of polynomials let $\deg(g) := n\mapsto \deg(g_n)$ be the sequence of degrees.
Analogously, for a polynomial $p$ define $\nvar(p)$ to be the number of variables appearing in $p$, and
for a sequence $g\in\IS^\IN$ of polynomials let $\nvar(g) := n\mapsto \nvar(g_n)$.
A sequence $g \in \IS^\IN$ is called a \emph{p-family} if $\deg(g)\in \Pb$ and $\nvar(g)\in \Pb$.
We sometimes call p-families \emph{ungraded} p-families, and we propose a definition of a \emph{graded} p-family in \S\oldref{sec:gradedpfamilies}, which will be useful for obtaining padding-free orbit closure formulations. The classical complexity classes that we discuss in this section are defined in terms of ungraded p-families.

An algebraic formula is a directed tree with a unique sink vertex. The source vertices are labelled by affine linear combinations of variables, and each internal node of the graph is labelled by either $+$ or $\times$. Nodes
compute polynomials in the natural way by induction.
An algebraic circuit is slightly more general: The underlying digraph is required to be acyclic, but not necessarily a tree.
The size of a circuit/formula is the number of its vertices.
$\VF$ is the class of p-families $(f_n)_{n\in\IN}$, with required formula size of $f_n$ being polynomially bounded.
$\VP$ is the class p-families $(f_n)_{n\in\IN}$, with required circuit size of $f_n$ being polynomially bounded.

Every homogeneous degree $d$ polynomial $p$ can be written as a product
\[
p =
\left(\begin{smallmatrix}\ell_{1,1,1} & \ell_{1,2,1} & \cdots & \ell_{1,n,1}\end{smallmatrix}\right)
\left(\begin{smallmatrix}
\ell_{1,1,2} &    \cdots   &   \ell_{1,n,2}     \\
  \vdots    & \ddots &   \vdots     \\
  \ell_{n,1,2}    &   \cdots    & \ell_{n,n,2}
\end{smallmatrix}\right)
\cdots
\left(\begin{smallmatrix}
\ell_{1,1,d-1} &    \cdots   &   \ell_{1,n,d-1}     \\
  \vdots    & \ddots &   \vdots     \\
  \ell_{n,1,d-1}    &   \cdots    & \ell_{n,n,d-1}
\end{smallmatrix}\right)
\left(\begin{smallmatrix}
\ell_{1,1,d} \\
  \vdots\\
  \ell_{n,1,d}
\end{smallmatrix}\right)
\]
of matrices whose entries are homogeneous linear polynomials.
We define $w(p)$ to be the smallest possible such $n$, and call it the \emph{homogeneous branching program width} of $p$.
For an inhomogeneous polynomial, we define $w(p) := \sum_{d\in\IN}w(p_d)$ to be the sum of the widths of its homogeneous components.
$\VBP$ is the class of p-families whose $w$ is polynomially bounded.

The {\em permanental complexity} of a polynomial $f$ is the smallest $n$ such that $f$ can be written as the permanent of an $n\times n$ matrix of affine linear polynomials.
The class~$\VNP$ consists of all p-families $(f_n)_{n\in\IN}$ for which the permanental complexity is polynomially bounded.

It is known that $\VF \subseteq \VBP \subseteq \VP \subseteq \VNP$~\cite{val79, Toda92}.
The conjectures $\VF \neq \VNP$, $\VBP \neq \VNP$, $\VP \neq \VNP$, are known as \emph{Valiant's conjectures}. Especially $\VBP \neq \VNP$ is known as the \emph{determinant vs permanent} problem.
A sequence $(c_n)_{n \in \IN}$ of natural numbers is called \emph{quasipolynomially bounded} if there exists a polynomial $q$ with $\forall n \geq 2: c_n \leq n^{q(\log_2 n)}$.
In the definitions of $\VF$, $\VBP$, $\VP$, if we change the upper bound on the complexity to ``quasipolynomially bounded'' instead of just ``polynomially bounded'',
then each time we obtain the same class, which we call $\VQP$, see \cite{bur00}.
The conjecture $\VNP \not\subseteq \VQP$ is called Valiant's \emph{extended} conjecture.

\subsection{Border Complexity}
\label{subsec:bordercompl}
The complexity notions mentioned above, such as formula size, circuit size, width $w$, permanental complexity, have an associated \emph{border complexity} variant: A polynomial has border complexity $\leq k$ if it is the limit of polynomials of complexity at most $k$. Here, the limit is taken in the Euclidean topology on the coefficient vector space, see e.g.~\cite{IS22}.
Border complexity measures are usually indicated by an underlined symbol: e.g., $\underline{w}$ is the border homogeneous algebraic branching program width.
Clearly $\underline{w}(p)\leq w(p)$ for all polynomials $p$.

The border complexity analogues of the classical classes are denoted by an overline, e.g., $\overline{\VF}$ is the class of p-families with polynomially bounded border formula complexity\footnote{see \cite{IS22} for the definition of the closure of sets of p-families in general.}.
While Valiant's conjecture states that $w(\per)$ grows superpolynomially ($\VBP\neq\VNP$), the Mulmuley-Sohoni conjecture states that $\underline w(\per)$ grows superpolynomially ($\VNP\not\subseteq\overline{\VBP}$).
The extended Valiant's conjecture states that $w(\per)$ grows superquasipolynomially ($\VNP\not\subseteq\VQP$),
and it is natural to
merge these to the extended Mulmuley-Sohoni conjecture: $\underline w(\per)$ grows superquasipolynomially ($\VNP\not\subseteq\overline{\VQP}$).

Border complexity is an old area of study in algebraic geometry. In theoretical computer
science it was introduced in \cite{BCRL79,Bin80} in the context of fast matrix
multiplication.
In algebraic complexity theory, border complexity was first discussed independently in~\cite{MS01, Bur04}.

\section{Graded p-families and Homogeneous Reductions}
\label{sec:gradedpfamilies}
In this section we generalize known concepts from algebraic complexity theory from univariate to bivariate by adding a degree parameter.
This gives the correct setting for homogeneous linear projections, which is the natural setting for {\em padding-free} geometric complexity theory.
We are very formal in this section, because the readers are used to affine projections, and some steps might seem very subtle.

For the connections between the homogeneous and inhomogeneous setting, see \S\oldref{subsec:homoginhomog}.

As usual, for a set $A$, we identify sequences $a\in A^\IN$ with functions $\IN\to A$, and we write $a_n=a(n)$.
We use the same notation for functions $\IN\times\IN\to A$, i.e., $a_{n,d} = a(n,d)$.

A function $m : \IN \times \IN \to \IN$
is called \emph{bivariately polynomially bounded} if there exists a bivariate polynomial~$t$ such that $\forall (n,d) \in \IN\times\IN: m_{n,d}\leq t(n,d)$.
We propose the following definition of a \emph{graded} p-family in order to work with the weak reduction notion of homogeneous linear projections, which enables padding-free orbit closure formulations.

\begin{definition}
\label{def:collection}
A \emph{graded p-family} $f$ is a map $f: \IN\times\IN \to \IS$
such that
\begin{itemize}
\item $(n,d)\mapsto\nvar(f_{n,d})$ is bivariately polynomially bounded, and
    \item every $f_{n,d}$ is either zero or homogeneous of degree~$d$.

\end{itemize}
\end{definition}

For example, $\IMM(n,d)=\IMM_{n,d}$ is a graded p-family.
The natural reduction notion for graded p-families are {\em homogeneous linear projections}, which are defined as follows.
Suppose $U,W$ are finite dimensional complex vector spaces and $p \in \bbC[U]_d$, $q \in \bbC[W]_d$ are homogeneous degree $d$ (where $d >0$) polynomials. 
We say that $p$ is a homogeneous linear projection of $q$, and write $p \leq_{\textup{homlin}} q$, if $p \in \{q \circ A \mid A \colon U \to W \text{ linear}\}$.
For degree $d=0$ we define that for any nonzero $q$ we have $p \leq_{\textup{homlin}} q$.
For graded p-families $f$ and $h$ we write $f\leq_{\textup{p-homlin}} h$ if there exists $m\in\Pb$ such that
for all $n,d$ we have $f_{n,d}\leq_{\textup{homlin}} h_{m_n,d}$.
The border complexity version is analogous: 
$p \trianglelefteq_{\textup{homlin}} q$, if $p \in \overline{\{q \circ A \mid A \colon U \to W \text{ linear}\}}$, and 
$f\trianglelefteq_{\textup{p-homlin}} h$, if $\exists m\in\Pb \ \forall n,d: f_{n,d}\trianglelefteq_{\textup{homlin}} h_{m_n,d}$. 
If $m$ is only quasipolynomially bounded, we obtain the analogous quasipolynomial variants
$f\leq_{\textup{qp-homlin}} h$
and
$f\trianglelefteq_{\textup{qp-homlin}} h$.

Ungraded p-families $g$ are graded p-families in the natural way, by setting $g_{n,d}$ to be the homogeneous degree $d$ component of $g_n$.
In particular, the permanent can be interpreted in this way as a graded p-family.
This allows us to phrase the four conjectures in this language:
\begin{align*}
\VNP \;=\; \VBP
&
~~~~~~~~~~~~~~~~\textup{ if and only if }
&
\per \;\leq_{\textup{p-homlin}}\; \IMM,
\\
\VNP \;\subseteq\; \overline{\VBP}
&
~~~~~~~~~~~~~~~~\textup{ if and only if } 
&
\per \;\trianglelefteq_{\textup{p-homlin}}\; \IMM,
\\
\VNP \;\subseteq\; \VQP
&
~~~~~~~~~~~~~~~~\textup{ if and only if } 
&
\per \;\leq_{\textup{qp-homlin}}\; \IMM,
\\
\VNP \;\subseteq\; \overline{\VQP}
&
~~~~~~~~~~~~~~~~\textup{ if and only if } 
&
\per \;\trianglelefteq_{\textup{qp-homlin}}\; \IMM.
\end{align*}
Since $\per$ is a p-family of homogeneous polynomials, the question $\per \trianglelefteq_{\textup{p-homlin}} \IMM$ is about the existence of an $m\in \Pb$ such that $\forall d: \per_d \;\trianglelefteq_{\textup{homlin}}\; \IMM_{m(d),d}$.
This has a padding-free orbit closure formulation under the general linear group, which is reductive: %
\[
\per_d \;\trianglelefteq_{\textup{homlin}}\; \IMM_{m_d,d}
\quad \textup{ iff } \quad
\overline{\GL_{d^2}\per_d} \;\subseteq\;
\overline{\GL_{2m_d+m_d^2(d-2)}\IMM_{m_d,d}}\;.
\]
This is the main advantage of using homogeneous linear projections as the reduction notion.
Our main contribution is to replace $\IMM$ by simpler graded p-families that capture $\VF$ or the large subset $\VTF$ of $\VF$; see~\ref{def:v3f} in \S\oldref{subsec:aritythree}.
This has two advantages: The orbit closures become simpler, and the separations from $\VNP$ become easier than $\VBP\neq\VNP$, because $\VTF\subseteq\VF\subseteq\VBP$, while the quasipolynomial versions of $\VTF$, $\VF$, $\VBP$ all coincide with $\VQP$.

\subsection{Main Results}
\label{subsec:mainresults}
Let
$
\nc e_d(X_1, \dots, X_n) \;:=\; \sum_{1 \leq I_1 < I_2 <\dots < I_d \leq n} X_{I_1} \dots X_{I_d}\;,
$
denote the elementary symmetric polynomial in noncommuting variables $X_1,\ldots,X_n$.
Let $L : \IC^{3\times 3}\to \IC$ be the sum of all 9 entries.
Let $\nc e_{3,n,d} := L \circ \nc e_d(A_1,A_2,\ldots,A_n)$, where each $A_i$ is a $3\times 3$ matrix of 9 fresh variables.
We denote by $\nc e_3$ the corresponding graded p-family.
\begin{theorem}\label{thm:nceVFpc}

\begin{align*}
\VNP \;=\; \VF
&
~~~~~~~~~~~~~~~~\textup{ if and only if }
&
\per \;\leq_{\textup{p-homlin}}\; \nc e_3,
\\
\VNP \;\subseteq\; \overline{\VF}
&
~~~~~~~~~~~~~~~~\textup{ if and only if } 
&
\per \;\trianglelefteq_{\textup{p-homlin}}\; \nc e_3,
\\
\VNP \subseteq \VQP
&
~~~~~~~~~~~~~~~~\textup{ if and only if } 
&
\per \;\leq_{\textup{qp-homlin}}\; \nc e_3,
\\
\VNP \;\subseteq\; \overline{\VQP}
&
~~~~~~~~~~~~~~~~\textup{ if and only if } 
&
\per \;\trianglelefteq_{\textup{qp-homlin}}\; \nc e_3.
\end{align*}
\end{theorem}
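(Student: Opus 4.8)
The plan is to show that $\nc e_3$ is \emph{complete for $\VF$ under homogeneous linear projections}, i.e.\ (\emph{easiness}) $\nc e_3 \in \VF$, and (\emph{hardness}) every homogeneous polynomial $g$ of degree $d\ge 1$ with a formula of size $s$ satisfies $g \leq_{\textup{homlin}} \nc e_{3,N,d}$ for some $N=\poly(s)$. The four equivalences then follow by one uniform argument for each of the classes $\mathcal{C}\in\{\VF,\overline{\VF},\VQP,\overline{\VQP}\}$ with the matching (polynomial or quasipolynomial) parameter bound $B$. For the forward direction, a $B$-homlin reduction of $\per$ to $\nc e_3$ together with easiness and the closure of $\mathcal{C}$ under homogeneous linear projections gives $\per\in\mathcal{C}$, hence $\VNP\subseteq\mathcal{C}$ (and $=$ when $\mathcal{C}=\VF$, since $\VF\subseteq\VNP$) because $\per$ is $\VNP$-complete under p-projections and $\mathcal{C}$ is closed under them. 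For the backward direction, if $\per\in\mathcal{C}$ then each $\per_d$ — homogeneous of degree $d$ — is, or is a limit of, polynomials of formula size $B(d)$; extracting the degree-$d$ homogeneous component costs only a $\poly(d)$ factor in formula size (by Lagrange interpolation) and commutes with limits, so hardness yields $\per_d\leq_{\textup{homlin}}$ (resp.\ $\trianglelefteq_{\textup{homlin}}$) $\nc e_{3,N_d,d}$ with $N_d=\poly(B(d))=B(d)$, uniformly in the approximation parameter in the border cases; padding the $\nc e$-family with zero matrices (itself a homogeneous linear projection) makes $d\mapsto N_d$ into a single bound of the required growth, giving $\per\leq_{\textup{$B$-homlin}}$ (resp.\ $\trianglelefteq_{\textup{$B$-homlin}}$) $\nc e_3$.

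\emph{Easiness.} By Lagrange interpolation in an auxiliary scalar, $\nc e_d(X_1,\dots,X_n)=\sum_{j=0}^{n}c_j\prod_{i=1}^{n}(1+t_jX_i)$ for suitable constants $t_j,c_j$, where each ordered product $\prod_i(1+t_jX_i)$ has a balanced formula of size $O(n)$ and multiplicative depth $O(\log n)$. Substituting $3\times 3$ matrices of variables and applying $L$ expands the matrix arithmetic with only a $3^{O(\log n)}=\poly(n)$ overhead, so $\nc e_{3,n,d}$ has formula size $\poly(n)$, independently of $d$; with $\nvar(\nc e_{3,n,d})\le 9n$ this gives $\nc e_3\in\VF\subseteq\VQP$.

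\emph{Hardness.} Let $g$ be homogeneous of degree $d\ge 1$ with a formula of size $s$ (the case $g=0$ is trivial). Applying Brent's depth reduction gives an equivalent formula of depth $O(\log s)$ and size $\poly(s)$, and the Ben-Or--Cleve construction then produces a product $\prod_{k=1}^{N}M_k=E_{13}(g)$, where $N=4^{O(\log s)}=\poly(s)$, $E_{13}(a):=I+a\,e_{1,3}$, and each $M_k$ is either a permutation matrix or an elementary matrix $E_{i_kj_k}(a_k)$ whose label $a_k$ is the (affine linear) label of an input leaf, possibly negated. Now substitute $x_i\mapsto tx_i$: since $g$ is homogeneous of degree $d$, the identity becomes $\prod_{k=1}^{N}M_k(tx)=E_{13}(t^dg(x))$. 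Writing $E_{i_kj_k}(a_k(tx))=E_{i_kj_k}(a_k^{(0)})\,E_{i_kj_k}(t\,a_k^{(1)}(x))$ (elementary matrices in the same position add parameters), with $a_k^{(0)}\in\IC$ and $a_k^{(1)}$ homogeneous linear, and moving every constant matrix (the permutations and the $E_{i_kj_k}(a_k^{(0)})$) to the left by conjugation, the product becomes $C\cdot\prod_{k=1}^{N}(I+tA_k)$, where $C$ is the product of all the constant matrices and each $A_k$ equals $a_k^{(1)}(x)$ times a fixed $3\times 3$ matrix (a conjugate of a matrix unit), hence a $3\times 3$ matrix of homogeneous linear forms in $x$. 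Evaluating the original identity at $x=0$ forces $C=E_{13}(g(0))=I$, so $\prod_{k=1}^{N}(I+tA_k)=E_{13}(t^dg(x))$; since $\nc e_d(A_1,\dots,A_N)=[t^d]\prod_{k=1}^{N}(I+tA_k)$, this gives $\nc e_d(A_1,\dots,A_N)=g(x)\,e_{1,3}$ and hence $L(\nc e_d(A_1,\dots,A_N))=g(x)$. As each entry of each $A_k$ is a homogeneous linear form in $x$ (or $0$), the induced specialisation of the $9N$ entry-variables of $\nc e_{3,N,d}$ is a homogeneous linear projection, so $g\leq_{\textup{homlin}}\nc e_{3,N,d}$ (pad $N$ to be $\ge d$ with zero matrices if needed).

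The step I expect to be the crux is the hardness argument, and specifically the choice to \emph{not} first convert the formula into a homogeneous formula — which can blow up its size quasipolynomially, which would degrade $N$ to quasipolynomial and break the first and third equivalences — but instead to homogenise at the level of the $\SL_3$ matrix identity via $x\mapsto tx$ and the reshuffling of the constant matrices; the observation that the leftover constant prefactor $C$ is forced to be the identity because $g$ vanishes at the origin is exactly what makes the product land on $E_{13}(t^dg)$ with honest $0/1$ entries elsewhere, so that $L\circ\nc e_d$ reproduces $g$ \emph{exactly} — exactness being essential since homogeneous linear projections cannot subtract off lower-order error terms. The other ingredients — Brent's depth reduction, the Ben-Or--Cleve encoding in $\SL_3$, the interpolation bound for $\nc e_3$, and the standard closure and $\VNP$-completeness facts for $\per$ — are classical, and adapting the bookkeeping to graded p-families and to the quasipolynomial variants is routine.
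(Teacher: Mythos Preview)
Your proof is correct, and the overall shape (easiness via interpolation of the matrix product, hardness via Brent plus Ben-Or--Cleve, then the four equivalences by standard closure and completeness facts) matches the paper. However, your hardness argument takes a genuinely different route from the paper's.

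The paper first converts the formula itself into an \emph{input-homogeneous-linear} (IHL) formula: Lemma~\ref{lem:inputhomogenize} tracks, at every gate, the pair $(g(0),\,g-g(0))$ and then absorbs all constants, producing a formula whose leaves are homogeneous linear forms. Only \emph{then} does it run Ben-Or--Cleve (Proposition~\ref{pro:VFKumar}), so each $A_k$ is automatically a single off-diagonal matrix unit scaled by a homogeneous linear form. You instead run the standard (affine-input) Ben-Or--Cleve first and homogenise \emph{afterwards}, at the level of the matrix identity: substitute $x\mapsto tx$, split $E_{ij}(a^{(0)}+t\,a^{(1)})=E_{ij}(a^{(0)})\cdot E_{ij}(t\,a^{(1)})$, and sweep all constant factors to the left by conjugation. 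Your observation that the accumulated constant prefactor $C$ is forced to be $I$ (by evaluating the identity at $x=0$, using $g(0)=0$) is precisely the matrix-level analogue of the paper's gate-level bookkeeping. Note that the paper's IHL conversion is also only polynomial in size (it is a constant-vs-nonconstant split, not a full degree-by-degree homogenisation), so your worry about quasipolynomial blowup does not in fact apply to their route either.

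What each approach buys: the paper's IHL framework is reused for the arity-3 material and yields the slightly sharper Corollary~\ref{cor:hardforVFH}, since the $A_k$ have zeros on the diagonal and so completeness holds already for the 6-variable-per-matrix family $\nc e_L$ with any off-diagonal $L$. Your conjugation can populate the diagonal of $A_k$, so you get exactly the 9-variable $\nc e_3$ of the theorem but not immediately the refinement; on the other hand your argument is self-contained and avoids introducing the IHL machinery.
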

Note that
$
\per_d \;\trianglelefteq_{\textup{homlin}}\; \nc e_{3,n,d}
\ \textup{ iff } \ 
\overline{\GL_{d^2}\per_d} \;\subseteq\;
\overline{\GL_{9n}\nc e_{3,n,d}}\;.
$
In the border setting, we manage to get the same results even for $\nc e_2$,
we simplify the orbit closure on the right hand side even further by introducing a new class $\VTF \subseteq \VF$ (see \S\oldref{subsec:aritythree}), whose quasipolynomial version is still $\VQP$.
The parity-alternating elementary symmetric polynomial $C_{n,d}$ is defined via
$
C_{n,d} \;:=\; \sum_{(i_1,i_2,\ldots,i_d) \in I} x_{i_1}x_{i_2} \cdots x_{i_d},
$
where $I$ is the set of length $d$ increasing sequences of numbers $i_1<i_2<\ldots<i_d$ from $\{1,\ldots,n\}$ in which for all $j$ the parity of $i_j$ differs from the parity of $i_{j+1}$,
and $i_1$ is odd, in other words, $i_j \equiv j \pmod 2$.

\begin{theorem}
\label{thm:intro:homogeneousVF}

\begin{align*}
\VNP \;\subseteq\; \overline{\VTF}
&
~~~~~~~~~~~~~~~~~~~~~\textup{ $\Longrightarrow$ } 
&
\per \;\trianglelefteq_{\textup{p-homlin}}\; C,
\\
\VNP \;\subseteq\; \overline{\VF}
&
~~~~~~~~~~~~~~~~~~~~~\textup{ $\Longleftarrow$ }
&
\per \;\trianglelefteq_{\textup{p-homlin}}\; C,
\\
\VNP \;\subseteq\; \overline{\VQP}
&
~~~~~~~~~~~~~~~~~\textup{ if and only if } 
&
\per \;\trianglelefteq_{\textup{qp-homlin}}\; C.
\end{align*}
\end{theorem}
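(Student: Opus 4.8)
The plan is to prove the three statements of \cref{thm:intro:homogeneousVF} by leveraging \cref{thm:nceVFpc} together with structural results about the class $\VTF$ and the arity-3 circuits studied in \S\oldref{subsec:aritythree}, since $C$ is morally the "continuant-style" univariate relative of $\nc e_2$. Observe first that $C_{n,d}$ arises as a specialization of $\nc e_2$: writing each $2\times 2$ matrix $A_i$ with only one nonzero entry (in alternating off-diagonal positions along the product), the sum of entries of $\nc e_d(A_1,\dots,A_n)$ collapses to exactly the parity-alternating sum defining $C$. So $C \leq_{\textup{homlin}} \nc e_2$ coordinatewise (after a polynomial blow-up of the index $n$), and conversely one shows $\nc e_2 \trianglelefteq_{\textup{p-homlin}} C$ by a careful analysis of the $2\times 2$ noncommutative elementary symmetric polynomial as an iterated product of $2\times 2$ matrices, expanding the matrix product into a sum over lattice paths; here the border is needed to handle the degenerate contributions. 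This reduces the theorem to understanding $\nc e_2$ and its relation to $\VTF$ and $\overline{\VF}$.

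Next I would establish the two "sandwiching" complexity classes. For the first implication ($\VNP \subseteq \overline{\VTF} \Rightarrow \per \trianglelefteq_{\textup{p-homlin}} C$), I would show that $C$ is hard for $\overline{\VTF}$ under $\trianglelefteq_{\textup{p-homlin}}$: every arity-3 formula of polynomial size (the defining model of $\VTF$) can be homogenized and then rewritten as a homogeneous iterated $2\times 2$ matrix product with entries homogeneous linear forms, by the standard divide-and-conquer/Ben-Or--Cleve style simulation of bounded-width formulas by $3\times 3$ (here $2\times 2$ suffices after taking the border) matrix products; the alternation of parities in the definition of $C$ is exactly what encodes the layered structure of such a product, and the border absorbs the normalization constants. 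Hence $\per \in \VNP \subseteq \overline{\VTF}$ gives $\per \trianglelefteq_{\textup{p-homlin}} C$. For the second implication ($\per \trianglelefteq_{\textup{p-homlin}} C \Rightarrow \VNP \subseteq \overline{\VF}$), I would argue that $C$ itself lies in $\overline{\VF}$ — indeed even in $\VF$, since $C_{n,d}$ is a projection of $\nc e_d$ which has a polynomial-size formula via the standard recursion $e_d^{(n)} = e_d^{(n-1)} + X_n e_{d-1}^{(n-1)}$ — so $\trianglelefteq_{\textup{p-homlin}}$-reducibility to $C$ forces $\per$, and therefore all of $\VNP$, into $\overline{\VF}$ (using that $\overline{\VF}$ is closed under $\trianglelefteq_{\textup{p-homlin}}$ reductions, as homogeneous linear projections only shrink formula complexity).

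Finally, for the quasipolynomial equivalence, I would combine both directions: since the quasipolynomial versions of $\VTF$, $\VF$, and $\VBP$ all coincide with $\VQP$ (as noted right after the definition of $\trianglelefteq_{\textup{p-homlin}}$), the $\overline{\VTF}$-hardness of $C$ and the membership $C \in \overline{\VF}$ upgrade, via allowing $m$ to be quasipolynomial, to $\VNP \subseteq \overline{\VQP} \Leftrightarrow \per \trianglelefteq_{\textup{qp-homlin}} C$; concretely the forward direction uses the hardness argument with $\overline{\VQP} = \overline{\VTF\text{-quasipoly}}$ and the backward direction uses $C \in \overline{\VF} \subseteq \overline{\VQP}$ plus closure of $\overline{\VQP}$ under $\trianglelefteq_{\textup{qp-homlin}}$. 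The main obstacle I anticipate is the careful bookkeeping in the $\overline{\VTF}$-hardness direction: one must show that \emph{polynomial-size arity-3 formulas} — not merely bounded-width branching programs — can be simulated by homogeneous $2\times 2$ matrix products whose entries are genuinely homogeneous linear forms (not affine), which requires homogenizing the formula first (tracking degrees layer by layer, which may blow up the width but only polynomially) and then verifying that the resulting iterated product really does specialize to $C$ after taking the entry-sum functional $L$ and the Euclidean closure; the parity-alternation constraint in $C$'s definition is delicate and must be matched exactly to the two-dimensional "state space" of the simulation.
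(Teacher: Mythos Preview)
Your high-level structure is correct and matches the paper's: show that $C$ is $\VTF$-hard under homogeneous linear \emph{border} projections, show that $\varphi(C)\subseteq\VF$, and for the quasipolynomial line use $\VQTF=\VQP$. This is exactly \Cref{thm:hardness} together with \eqref{eq:quasipolyallfive}. However, two of the load-bearing steps in your proposal are gaps rather than arguments.

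\textbf{The $\VTF$-hardness step.} You describe it as a ``Ben-Or--Cleve style simulation'' in which ``$2\times 2$ suffices after taking the border.'' But standard Ben-Or--Cleve genuinely needs $3\times 3$ matrices (that is precisely \Cref{pro:VFKumar}); dropping to $2\times 2$ is the whole point and is \emph{not} obtained by simply passing to the closure. The paper's mechanism is specific: (i) rewrite every arity-3 product as a sum of \emph{negative cubes} via $xyz=\tfrac{1}{24}\big((x+y+z)^3-(x+y-z)^3-(x-y+z)^3+(x-y-z)^3\big)$, and (ii) simulate the cube gate by the $\eps$-identity
\[
(\idtwo+\eps^{-1}fE_{\textup{odd}})(\idtwo+\eps^{2}\alpha fE_{\textup{even}})(\idtwo-\eps^{-1}fE_{\textup{odd}})\;\simeq\;\idtwo-\alpha f^{3}E_{\textup{odd}},
\]
with a separate treatment for even degree via Euler's identity and partial derivatives. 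The parity-alternation in $C$ is forced by the alternation $E_{\textup{odd}},E_{\textup{even}},E_{\textup{odd}},\ldots$ in this product, not by any generic ``layered structure'' of a branching program. Your proposal identifies the obstacle in its last paragraph but supplies no device to overcome it; the detour through $\nc e_2$ does not help (the paper never proves hardness of $\nc e_2$ as an intermediate step, and the claimed $\nc e_2\trianglelefteq_{\textup{p-homlin}} C$ would require essentially the same missing argument).

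\textbf{The quasipolynomial equivalence.} You invoke $\VQTF=\VQP$ as if it were ``noted right after the definition.'' It is not given; the paper has to \emph{prove} it. This takes two nontrivial pieces: $\VF\subseteq\VTP$ (\Cref{thm:VPVTP}), which already requires a parity-homogenization trick and a dummy-variable substitution to cope with the absence of arity-2 products, and an arity-3 adaptation of Valiant--Skyum--Berkowitz--Rackoff depth reduction (\Cref{thm:VSBR}, giving \Cref{thm:VQTFVQTP}). Ordinary depth reduction freely multiplies by constants and pairs factors two at a time, neither of which is available in the IHL arity-3 model, so this is real work that your proposal omits.
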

Note that
``$
\per_d \;\trianglelefteq_{\textup{homlin}}\; C_{n,d}
\ \textup{ iff } \ 
\overline{\GL_{d^2}\per_d} \;\subseteq\;
\overline{\GL_{n}C_{n,d}}
$''

is a formulation with an intriguingly simple orbit closure.
Moreover, it seems reasonable to try to prove $\VNP \not\subseteq\overline{\VTF}$ or $\VNP \not\subseteq\overline{\VF}$ before proving the more difficult
$\VNP \not\subseteq\overline{\VBP}$. 

\section{Related Concepts}
\subsection{Classical homogeneous complexity measures: Waring rank, Chow rank, tensor rank}
\label{subsec:classical}
In classical algebraic geometry, homogeneous linear projections are the standard way to compare homogeneous polynomials and tensors.

We list some of the classical examples in this subsection.

Given a homogeneous degree $d$ polynomial $f$, the \emph{Waring rank} of $f$, denoted $\WR(f)$, is the smallest $r$ such that there exist homogeneous linear polynomials $\ell_1 \vvirg \ell_r$, with $f = \sum_{i=1}^r \ell_i^d$.

The \emph{border Waring rank} of $f$, denoted $\bwr(f)$, is the smallest $r$ such that $f$ can be written as limit of a sequence of polynomials $f_\eps$ with $\WR(f_\eps) \leq r$.
Given the graded p-family $P_{n,d} := x_1^d+\cdots+x_n^d$, we see that
$\WR(p)\leq r$ iff $p \leq_{\textup{homlin}} P_{n,d}$
and
$\bwr(p)\leq r$ iff $p \trianglelefteq_{\textup{homlin}} P_{n,d}$, which is equivalent to
$p \in \overline{\GL_n P_{n,d}}$, provided $p$ is defined in the variables $x_1,\ldots,x_n$.
Waring rank was studied already in the eighteenth century \cite{Cay:TheoryLinTransformations,Sylv:PrinciplesCalculusForms,Cleb:TheorieFlachen} in the context of invariant theory, with the aim to determine normal forms for homogeneous polynomials. We mention the famous Sylvester Pentahedral Theorem, stating that a generic cubic form in four variables can be written uniquely as sum of five cubes. At the beginning of the twentieth century, the early work on secant varieties in classical algebraic geometry \cite{Palatini:SuperficieAlg,Terr:seganti} implicitly commenced the study of border Waring rank.
In the algebraic complexity theory literature, Waring rank is called the homogeneous $\Sigma\Lambda\Sigma$-circuit complexity.

The \emph{Chow rank} of $f$, denoted $\CR(f)$, is the smallest $r$ such that there exist homogeneous linear polynomials $\ell_{1,1} \vvirg \ell_{r,d}$, with $f = \sum_{i=1}^r \ell_{i,1}\cdots\ell_{i,d}$.
The \emph{border Chow rank} of $f$, denoted $\bcr(f)$, is the smallest $r$ such that $f$ can be written as limit of a sequence of polynomials $f_\eps$ with $\CR(f_\eps) \leq r$.
Given the graded p-family $Q_{n,d} := x_{1,1}\cdots x_{1,d}+\cdots+x_{n,1}\cdots x_{n,d}$, we see that
$\CR(p)\leq r$ iff $p \leq_{\textup{homlin}} Q_{n,d}$
and
$\bcr(p)\leq r$ iff $p \trianglelefteq_{\textup{homlin}} Q_{n,d}$, which is equivalent to
$p \in \overline{\GL_{nd} Q_{n,d}}$, provided $p$ is defined in the variables $x_{1,1},\ldots,x_{n,d}$.
In the algebraic complexity literature, Chow rank is called the homogeneous $\Sigma\Pi\Sigma$-circuit complexity.

The noncommutative analog (i.e., variables do not commute) of Chow rank is the classical tensor rank.
The notion of border rank for tensors was introduced in \cite{BCRL79} to construct faster-than-Strassen matrix multiplication algorithms. In \cite{Bin80}, Bini proved that tensor border rank and tensor rank define the same matrix multiplication exponent. Today this theory is deeply related to the study of Gorenstein algebras \cite{Iarrobino-Kanev,BuczBucz:SecantVarsHighDegVeroneseReembeddingsCataMatAndGorSchemes}, the Hilbert scheme of points \cite{Jeli:Pathologies}, and deformation theory \cite{BB-apolarity,JelMan:LimitsSaturatedIdeals}.
Homogeneous linear projections are used to compare not only the rank of tensors, but they are used to define a partial order on the set of all tensors, see e.g.~\cite[Ch~14.6]{burgisser2013algebraic}. This is also a common concept in quantum information theory.

\subsection{Homogeneous vs Inhomogeneous}
\label{subsec:homoginhomog}
In this subsection we work out the relation to classical (i.e., ungraded) algebraic complexity theory.
In order to define the notion of completeness of graded p-families for the classical algebraic complexity classes we use the following map $\varphi$.
Given $d \in \Pb$, $m\in \Pb$
and $a \in \IC^{\IN\times\IN}$,
then a graded p-family $f$ can be converted into an ungraded p-family $\varphi(f,a,m,d)$ by setting
$
\varphi(f,a,m,d)_n \ := \ \sum_{i=0}^{d_n} a_{n,i} \cdot f_{m_n,i}.
$
For a graded p-family $f$ we define the set $\varphi(f)$ of \emph{associated ungraded p-families} as
$\varphi(f) := \{ \varphi(f,a,m,d) \mid
m\in \Pb, \ d\in \Pb, \ a\in\IC^{\IN\times\IN}\}$.

\begin{definition}
\label{def:hardcomplete}
Let $\mathscr C \subseteq \IS^\IN$ be a class of ungraded p-families.
We say that a graded p-family $f$ is \emph{$\mathscr C$-hard}
if for all $g \in \mathscr C$ we have $g \leq_{\textup{p-homlin}}f$.

We say that $f$ is $\mathscr C$-complete
if $f$ is $\mathscr C$-hard and
$\varphi(f) \subseteq \mathscr C$.

\end{definition}
There are analogous variants for completeness under border projections ($g \trianglelefteq_{\textup{p-homlin}}f$) and quasipolynomial projections ($g \leq_{\textup{qp-homlin}}f$), and quasipolynomial border projections ($g \trianglelefteq_{\textup{qp-homlin}}f$).

The main example is that the graded p-family $\IMM$ is $\VBP$-complete under homogeneous linear p-projections.
From \S\oldref{subsec:classical}, $P$ is complete for the class of p-families with polynomially bounded Waring rank, and $Q$ is complete for the class of p-families with polynomially bounded Chow rank.

While for ungraded p-families we have to allow affine linear projections as reductions, for graded p-families we can (and always will) use the {\em weaker} 
notion of homogeneous linear projections.
Hence, it is {\em not obvious} how to turn a $\mathscr C$-complete ungraded $p$-family (under affine linear projections) into a $\mathscr C$-complete graded p-family (under homogeneous linear projections)! We illustrate this scenario by an example below.

Let us consider a ungraded p-family $g$, which is $\VF$-complete under affine linear projections; then $g$ interpreted as a graded p-family is {\em not necessarily} $\VF$-complete under homogeneous linear projections, as the following example illustrates.
The ungraded p-family $\IMM_3$ defined via $(\IMM_3)_n = \IMM_{3,n}$
is an ungraded $\VF$-complete p-family.
The constant ungraded p-family with each element $x_1^2+\cdots+x_{7}^2$ is in $\VF$, but by construction $\IMM_{3,2}$ is nonzero only for exactly $n=2$, and there is no homogeneous linear projection of $\IMM_{3,2}$ to
$x_1^2+\cdots+x_{7}^2$
(because every homogeneous linear projection of $\IMM_{3,2}$ has only at most 6 essential variables, i.e., its $\GL$-orbit has dimension at most 6.%
However, the reverse works under mild conditions on self-reducibility of $f$ under affine projections and on being able to simulate sums; as an example we refer to the following claim.
\begin{claim}We write $p\leq_{\textup{afflin}}q$ if $p$ can be obtained from $q$ by replacing variables in $q$ by affine linear polynomials.
Let $f$ be a graded p-family that is $\mathscr C$-complete under homogeneous linear projections, and assume that
$\forall n,d: f_{n,d-1}\leq_{\textup{afflin}} f_{n,d}$ and $f_{n-1,d}\leq_{\textup{homlin}} f_{n,d}$.
Let $g := \varphi(f,\textup{diag}(1,\ldots,1),\textup{id}_\IN,\textup{id}_\IN)$
with the property that
there exists a bivariately polynomially bounded $q$ such that $\forall n,k$: if $h_1,\ldots,h_k \leq_{\textup{afflin}} g_n$, then $h_1+\ldots+h_k \leq_{\textup{afflin}} g_{q(k,n)}$.
Then $g$ is $\mathscr C$-complete under affine linear projections.
\end{claim}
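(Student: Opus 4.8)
The plan is to prove the two directions of completeness separately: first that $g$ inherits $\mathscr C$-hardness under affine linear projections, and then that $g \in \mathscr C$ (so that $g$ is a legitimate member of the class it is complete for). The key observation is that the hypotheses $f_{n,d-1}\leq_{\textup{afflin}} f_{n,d}$ and $f_{n-1,d}\leq_{\textup{homlin}} f_{n,d}$ let us ``collapse the grading'' of $f$ into the single-index family $g$ while only losing a polynomial factor, and the sum-closure hypothesis on $g$ lets us recombine the homogeneous pieces after a homogeneous projection has been split by degree.

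First I would handle hardness. Let $h \in \mathscr C$ be an arbitrary ungraded p-family; we want $h \leq_{\textup{afflin}} g$. Since $f$ is $\mathscr C$-complete under homogeneous linear projections and $\mathscr C$ is defined via ungraded p-families, $h$ (interpreted as a graded p-family by taking homogeneous components $h_{n,d}$) satisfies $h \leq_{\textup{p-homlin}} f$; so there is $m \in \Pb$ with $h_{n,d} \leq_{\textup{homlin}} f_{m_n,d}$ for all $n,d$, and $h_n = \sum_{d=0}^{D_n} h_{n,d}$ for some $D_n \in \Pb$ (the degree bound of $h$). Each $h_{n,d}$ is then a homogeneous linear projection, hence in particular an affine linear projection, of $f_{m_n,d}$. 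Using $f_{n,d-1}\leq_{\textup{afflin}} f_{n,d}$ iteratively I can push every $f_{m_n,d}$ with $d \le D_n$ down to a common $f_{m_n,D_n}$ wait --- more carefully, to make all the pieces affine-linear projections of \emph{the same} member of $g$, I would use that $g_N = \varphi(f,\textup{diag}(1,\ldots,1),\textup{id}_\IN,\textup{id}_\IN)_N = \sum_{i=0}^N f_{N,i}$, so $f_{N,i} \leq_{\textup{afflin}} g_N$ for every $i \le N$ (substitute $0$ for the variables of the other graded pieces). Taking $N := \max(m_n, D_n)$, which is in $\Pb$, together with $f_{m_n,d}\leq_{\textup{homlin}} f_{N,d}$ (from $f_{n-1,d}\leq_{\textup{homlin}} f_{n,d}$ applied $N-m_n$ times), I get $h_{n,d} \leq_{\textup{afflin}} g_N$ for every $d \le D_n$. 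Now the sum-closure hypothesis on $g$: there is a bivariately polynomially bounded $q$ such that $h_{n,0}+\dots+h_{n,D_n} \leq_{\textup{afflin}} g_{q(D_n+1,\,N)}$, and since $D_n, N \in \Pb$, the index $q(D_n+1,N)$ is polynomially bounded in $n$. Hence $h_n \leq_{\textup{afflin}} g_{q(D_n+1,N)}$ for all $n$, i.e.\ $h \leq_{\textup{afflin}} g$, which is $\mathscr C$-hardness.

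Second, I would verify $g \in \mathscr C$. By definition $g = \varphi(f,\textup{diag}(1,\ldots,1),\textup{id}_\IN,\textup{id}_\IN) \in \varphi(f)$, and since $f$ is $\mathscr C$-complete (not merely hard) we have $\varphi(f)\subseteq\mathscr C$ by Definition~\ref{def:hardcomplete}. Therefore $g \in \mathscr C$. Combining the two parts, $g$ is $\mathscr C$-hard under affine linear projections and lies in $\mathscr C$, so $g$ is $\mathscr C$-complete under affine linear projections.

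I expect the main obstacle to be the bookkeeping in the hardness direction: making sure that after splitting an ungraded $h_n$ into its $O(\poly(n))$ homogeneous components, each component is realized as an affine projection of \emph{one fixed} index of $g$ (rather than an $n$-dependent family of indices), so that the sum-closure hypothesis applies cleanly and produces a single polynomially bounded index. This is exactly what the two structural hypotheses $f_{n,d-1}\leq_{\textup{afflin}} f_{n,d}$ and $f_{n-1,d}\leq_{\textup{homlin}} f_{n,d}$ are there to supply --- the first lets us embed a lower-degree graded slice into a higher-degree one of the same width, the second lets us increase the width --- and the only real care needed is tracking that all the resulting index bounds stay in $\Pb$, which they do because $q$ is bivariately polynomially bounded and composition of polynomially bounded sequences with bivariate polynomials is again polynomially bounded. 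The boundary case $d=0$ (where $\leq_{\textup{homlin}}$ is defined by fiat) should be checked but is harmless since the degree-$0$ component of $h_n$ is a constant, trivially an affine projection of anything nonzero.
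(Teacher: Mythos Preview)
Your overall architecture matches the paper's: show $g\in\mathscr C$ via $\varphi(f)\subseteq\mathscr C$, then for hardness push each homogeneous piece $h_{n,d}$ to a single $g_N$ and invoke the sum-closure hypothesis. The gap is in the middle step, and it stems from a misreading of $g$.

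With $a=\textup{diag}(1,\ldots,1)$ one has $a_{n,i}=[n=i]$, so
\[
g_N \;=\; \sum_{i=0}^{N} a_{N,i}\, f_{N,i} \;=\; f_{N,N},
\]
not $\sum_{i=0}^N f_{N,i}$. This is exactly how the paper uses it: they write $f_{\max\{m_n,\deg h_n\},\max\{m_n,\deg h_n\}}=g_{\max\{m_n,\deg h_n\}}$. Your step ``$f_{N,i}\leq_{\textup{afflin}} g_N$ by setting the variables of the other graded pieces to $0$'' is therefore based on the wrong formula; and even under your reading it would be unjustified, since nothing in the definition of a graded p-family forces different $f_{N,i}$ to use disjoint variables, and one cannot in general extract a single homogeneous component of a polynomial by an affine linear substitution. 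A symptom of the problem is that your final argument never uses the hypothesis $f_{n,d-1}\leq_{\textup{afflin}} f_{n,d}$ at all.

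The fix is precisely the idea you started with and then abandoned after ``wait'': from $h_{n,d}\leq_{\textup{homlin}} f_{m_n,d}\leq_{\textup{homlin}} f_{N,d}$ (width-increase) apply $f_{n,d-1}\leq_{\textup{afflin}} f_{n,d}$ iteratively to obtain $f_{N,d}\leq_{\textup{afflin}} f_{N,N}=g_N$ for $N=\max\{m_n,\deg h_n\}$. Then the sum-closure hypothesis gives $h_n\leq_{\textup{afflin}} g_{q(\deg h_n+1,\,N)}$, which is the paper's bound.
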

\begin{proof}
From $\varphi(f) \subseteq \mathscr C$ it follows that $g\in \mathscr C$. Now, let $h\in \mathscr C$ be an ungraded p-family.
We have $h \leq_{\textup{p-homlin}} f$, hence 
\[
\forall n,d : \ h_{n,d} \leq_{\textup{homlin}} f_{m_n,d} \,\leq_{\textup{afflin}}\, f_{\max\{m_n,\deg(h_n)\},\max\{m_n,\deg(h_n)\}} \;=\; g_{\max\{m_n,\deg(h_n)\}}.\]
Therefore, $\forall n: h_{n} \leq_{\textup{afflin}} g_{q(\deg(h_n)+1,\max\{m_n,\deg(h_n)\})}$.
Define
\[a(n) := q(\deg(h_n)+1,\max\{m_n,\deg(h_n)\}).\]
Thus, $\forall n: h_n\leq_{\textup{afflin}} g_{a(n)}$, which proves the claim, because $a\in \Pb$.
\end{proof}

While $\IMM$ is a $\VBP$-complete graded p-family
and $\IMM_3$ is a $\VF$-complete ungraded p-family,
our paper is the {\em first to introduce} a $\VF$-complete \emph{graded} p-family $\nc e_3$, see \ref{thm:nceVFpc}.
It is unclear if graded complete p-families for $\VP$ or for $\VNP$ \emph{exist}, and we leave this as an open question.
For example, it is not obvious if a universal circuit family for $\VP$ can be used to construct a $\VP$-complete graded p-family under homogeneous linear projections.

\section{Proof Ideas}

In this section, we briefly sketch the overall proof idea of \ref{thm:nceVFpc} and \ref{thm:intro:homogeneousVF}.
\subsection{Proof idea of \ref{thm:nceVFpc}}~The recent paper \cite[Section~3]{DGIJL23} introduced a notion of complexity with a rigid interplay between homogeneous linear entries and fixed constants, which they call Kumar's complexity. It is modeled after Kumar's construction in \cite{kum20}.
For a polynomial $f$, Kumar's complexity of~$f$ is the smallest
$m$ such that there exists a constant $\alpha$ and homogeneous linear polynomials $\ell_i$ such that
\begin{equation}\label{eq:Kc}
\textstyle
f = \alpha\big(\big(\prod_{i=1}^m (1+\ell_i)\big)-1\big).
\end{equation}

We study an analogous notion for matrices.
Let $E_{n,d}$ be the homogeneous degree $d$ part of the sum of the entries of
\[
\begin{pmatrix}
1 & x_{1,1,2} & x_{1,1,3} \\
x_{1,2,1} & 1 & x_{1,2,3} \\
x_{1,3,1} & x_{1,3,2} & 1
\end{pmatrix}
\cdots
\begin{pmatrix}
1 & x_{n,1,2} & x_{n,1,3} \\
x_{n,2,1} & 1 & x_{n,2,3} \\
x_{n,3,1} & x_{n,3,2} & 1
\end{pmatrix}
-
\begin{pmatrix}
1 & 0 & 0 \\
0 & 1 & 0 \\
0 & 0 & 1
\end{pmatrix}.
\]
In the expansion, the noncommutative elementary symmetric polynomials appear.
Our study of this setup leads to a homogenized version of the result by 
Ben-Or \& Cleve \cite{BC92}.
Here we have to pay close attention on how to deal with field constants, and we define the notion of \emph{input-homogeneous-linear computation} (IHL), see \S\oldref{subsec:inputhomcomp}. In particular, we prove an input-homogeneous-linear version of Brent's depth reduction, see \Cref{lem:inputhomogenize}.
\ref{thm:nceVFpc} appears in \S\oldref{sec:homogcompl} as \ref{cor:hardforVFH}.

\medskip
\subsection{Proof idea of \ref{thm:intro:homogeneousVF}} From $3 \times 3$ matrices, we turn to $2 \times 2$ matrices.
Note that (for odd $d$) $C_{n,d}$ is the homogeneous degree $d$ part of the $(1,2)$ entry of
$
\big(\begin{smallmatrix}
1 & x_1 \\
0 & 1
\end{smallmatrix}\big)
\big(\begin{smallmatrix}
1 & 0 \\
x_2 & 1
\end{smallmatrix}\big)
\cdots
\big(\begin{smallmatrix}
1 & x_n \\
0 & 1
\end{smallmatrix}\big)
-
\big(\begin{smallmatrix}
1 & 0 \\
0 & 1
\end{smallmatrix}\big).
$
\ref{thm:intro:homogeneousVF} appears in \S\oldref{sec:homogcompl} as \Cref{thm:hardness}.
Its proof is based on the construction of \cite{BIZ18}, which is, however inherently {\em affine}. To circumvent this, we convert the product gate into an arity 3 homogeneous product gate. The resulting analysis of arithmetic circuits and formulas allowing only arity 3 homogeneous product gates is surprisingly subtle. 
The graded p-family $C_{n,d}$ can be seen as a homogeneous variant of the continuant in \cite{BIZ18}.

For the last part of \ref{thm:intro:homogeneousVF}, we prove that $\VQTF = \VQP$.
The rest of the hardness proof follows then completely analogously via quasipolynomial homogeneous linear border projections.
The proof of $\VQTF = \VQP$ proceeds in two steps:
We first show that $\VF$ lies in 
$\VTP$ (the circuit analog of $\VTF$), see \ref{thm:VPVTP}, where we first ``parity-homogenize'' the formula (every gate has only even or only odd nonzero homogeneous components), and then compute $z \cdot f$ at each even-degree gate instead of $f$, where $z$ is a new variable. This additional factor $z$ is then later replaced, which is the main reason why the output of this construction is a {\em circuit} and not a formula.
Since we know that $\VTF\subseteq \VF$,
we are now in this situation:
\begin{align*}
    \VTF \ \subseteq \ \VF \ \subseteq \ \VTP \cap \VBP \ \subseteq \ \VP.
\end{align*}
Our proof does not give $\VTF=\VF$, see \Cref{rem:VFVTF}.
We conclude our proof by showing that $\VQTF = \VQTP$,
which implies that both classes are equal to $\VQTF = \VQF = \VQTP$, but we already know $\VQF=\VQP$.
For details, see \eqref{eq:inclusions} and~\ref{thm:VQTFVQTP}.

To achieve this, we use an arity-3 basis variant of the Valiant-Skyum-Berkowitz-Rackoff circuit depth reduction \cite{vsbr83}, which is a bit more involved than the original proof.

\section{Input-homogenization and Arity 3 Products}
\label{sec:homogcompl}

In this section, $f$ is a polynomial, and not a graded p-family.

\subsection{Input-homogeneous-linear Computation}
\label{subsec:inputhomcomp}

We start with a technicality in the definition of arithmetic circuits.
In this section every edge of an arithmetic circuit is labelled with a field constant. Instead of just forwarding the computation result of a gate to another gate, these edges rescale the polynomial along the way.
For arithmetic \emph{formulas} we do \emph{not} allow this, as we will see that it is unnecessary.
In other words, we allow $g+h$ gates for formulas, while we allow a $\alpha \cdot g + \beta \cdot h$ gates in circuits, and analogously for multiplication.

The \emph{depth} of an arithmetic circuit/formula is the length of the longest path from the source to a leaf.
\begin{definition}
An arithmetic formula/circuit is called \emph{input-homogeneous-linear} (IHL) if all its leaves are labelled with \emph{homogeneous linear} polynomials.
\end{definition}
In particular (contrary to ordinary arithmetic formulas/circuits)
in an IHL formula/circuit
we \emph{do not} allow any leaf to be labelled with a field constant. It now becomes clear why we needed the technicality:
For any $\alpha\in\IC$,
if an IHL circuit with $s$ gates computes a polynomial $f$, then using the scalars on the edges there exists an IHL circuit computing $\alpha f$ with also only $s$ many gates. For formulas this rescaling can be simulated by rescaling a subset of the leaves.
Indeed, we rescale the root of the formula by induction: we rescale a summation gate by rescaling both children, we rescale a product gate by rescaling an arbitrary child.
Alternatively, if $f$ is homogeneous, one can rescale the input gates by the $\sqrt[d]{\alpha}$. The latter technique works for formulas and circuits alike, but we will not use this method.

It is easy to see that IHL formulas/circuits can only compute polynomials $f$ with $f(0)=0$.
But other than that, being IHL is not a strong restriction, as the following simple lemma shows. We write $\widehat f := f-f(0)$.
\begin{lemma}
\label{lem:inputhomogenize}
Given an arithmetic circuit of size $s$ computing a polynomial $f$, then there exists an IHL arithmetic circuit of size $6s$ and depth $3s$ computing $\widehat f$.

There exists a polynomial $p$ such that:
Given any arithmetic formula of size $s$ computing a polynomial~$f$, then there exists an IHL arithmetic formula of size $p(s)$ and depth $O(\log(s))$ computing~$\widehat f$.
\end{lemma}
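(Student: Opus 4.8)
The plan is to produce, for every gate $v$ of the given circuit/formula computing a polynomial $g_v$, a gadget that computes the pair $(g_v(0),\, \widehat{g_v}) = (g_v(0),\, g_v - g_v(0))$, i.e. the constant term is carried separately and the non-constant part is computed by an IHL sub-circuit. At a leaf labelled by a variable $x_i$ the pair is $(0,x_i)$, and at a leaf labelled by a constant $c$ (which must be re-expressed, since IHL forbids constant leaves) the pair is $(c,0)$ — but the ``$0$'' second coordinate is simply the absence of a sub-circuit, so no constant leaf is ever created. For an addition gate $w = u+v$ the new constant is $g_u(0)+g_v(0)$ and the new non-constant part is $\widehat{g_u}+\widehat{g_v}$, a single addition of the two sub-circuits (with the edge-scalars absorbing nothing here). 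For a product gate $w = u\cdot v$ we use $g_u g_v = g_u(0)g_v(0) + g_u(0)\widehat{g_v} + g_v(0)\widehat{g_u} + \widehat{g_u}\widehat{g_v}$; the new constant is $g_u(0)g_v(0)$, and the new non-constant part is the sum of three terms: $\widehat{g_u}\widehat{g_v}$ (one product gate feeding the two sub-circuits), $g_u(0)\widehat{g_v}$ (the sub-circuit for $v$ with its incoming edge scaled by the constant $g_u(0)$, which is why we need edge-scalars in circuits), and $g_v(0)\widehat{g_u}$ similarly. So each original product gate becomes $O(1)$ new gates, and each original addition gate becomes $O(1)$ new gates; the size bound $6s$ and a crude depth bound $3s$ for circuits follow by taking the constants $g_v(0)$ to be precomputed in the field (they are not part of the circuit) and counting. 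The final output is $\widehat f$, the non-constant part at the root.

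For the circuit statement this is essentially the whole argument: the bookkeeping of constants is done outside the circuit, the edge-scalars handle the multiplications by $g_u(0)$ and $g_v(0)$, and the gate count and depth come out to $6s$ and $3s$ by a direct induction on the gadget replacing each gate. The non-trivial point is that we must not introduce constant leaves; the ``pair'' viewpoint makes clear that every second coordinate is either genuinely computed by an IHL sub-circuit or is identically $0$ and hence represented by nothing.

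For the \emph{formula} statement the naive replacement fails because addition and product gates each fan out into several copies of the sub-formulas for the children, so the size could blow up exponentially in the depth. The fix is to first apply Brent's depth-reduction theorem to the given size-$s$ formula to obtain an equivalent formula of size $\poly(s)$ and depth $O(\log s)$, and \emph{then} run the pair-construction above; since each gate is now replaced by $O(1)$ gates and the depth only grows by a constant factor, the resulting IHL formula has size $\poly(s)$ and depth $O(\log s)$. One has to check that the pair-construction, applied to a formula, produces a formula — i.e. that no sharing is forced — and that the multiplications by the constants $g_u(0)$, $g_v(0)$, which are allowed on edges in circuits but not in formulas, can instead be pushed into the leaves: as noted in the text preceding the lemma, rescaling a sum gate is done by rescaling both children and rescaling a product gate by rescaling one child, so by induction the scalar $g_u(0)$ multiplying the sub-formula $\widehat{g_v}$ is absorbed into (a subset of) the homogeneous-linear leaves of that sub-formula, keeping it IHL. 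The main obstacle is precisely this interplay: ensuring that after depth reduction the constant-propagation gadget keeps the object a \emph{formula} (no reuse of a sub-formula by two parents) while still absorbing all edge-scalars into homogeneous-linear leaves, which is why the statement only claims a formula (not a low-depth one without Brent) and why $p$ is some fixed polynomial rather than an explicit small bound.
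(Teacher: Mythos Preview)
Your proposal is correct and follows essentially the same approach as the paper: the pair construction $(g(0),\widehat g)$ at every gate, Brent's depth reduction applied first in the formula case, and absorbing the scalar multiplications $g_u(0)\cdot\widehat{g_v}$ into edge labels (for circuits) or into the homogeneous-linear leaves (for formulas). The only cosmetic difference is that the paper initially creates explicit gates for the constant parts $g(0)$ and then evaluates and removes these constant-only subcircuits at the end, whereas you track the constants externally from the start; both routes yield the same bounds.
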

\begin{proof}
We treat the case of formulas first.
We first use Brent's depth reduction \cite{brent74} to ensure that the size is $\poly(s)$ and the depth is $O(\log(s))$.
We now proceed in a way that is similar to the homogenization of arithmetic circuits.
Let $F$ be the formula computing $f$. We replace every computation gate
(that computes some polynomial $g$)
by a pair of gates (and some auxiliary gates), one computing $g(0)$ and one computing $\widehat g$.
Clearly,

\begin{alignat*}{3}
& \big((g+h)(0),\,\widehat{g+h}\big) \;&&=\; \big(g(0)+h(0),\,\widehat g+\widehat h\big)\;\;\;\;&&(\textup{addition gate})\;,\\
     & \big((g \cdot h)(0),\,\widehat{g\cdot h}\big) \;&& =\; \big(g(0)\cdot h(0),\,g(0)\cdot \widehat h + \widehat g\cdot h(0) + \widehat g \cdot \widehat h \big)\;\;&&(\textup{multiplication gate})\;.
\end{alignat*}

Therefore, an addition gate is just replaced by 2 addition gates, while a multiplication gate is replaced by 4 multiplication gates and 2 addition gates (and this gadget has depth 3).
We copy the subformulas of $g(0)$, $h(0)$, $\widehat g$, and $\widehat h$, which maintains the depth, and it keeps the size $\poly(s)$.
In this construction additions happen only between constants or between non-constants, but never between a constant and a non-constant.
Therefore each maximal subformula of constant gates can be evaluated and replaced with a single constant gate, and these gates are multiplied with non-constant gates (with the one exception of the gate for $f(0)$).
But in a formula, scaling a non-constant gate by a field element {\em does not} require a multiplication gate, and instead we can recursively pass this scaling operation down to the children, as explained before this lemma.
At the end we remove the one remaining constant gate for $f(0)$ and are done.

For circuits we proceed similarly. We skip the depth reduction step.
Let $C$ be the formula computing $f$. We replace every computation gate
(that computes some polynomial $g$)
by a pair of gates (and some auxiliary gates), one computing $g(0)$ and one computing $\widehat g$.
Clearly, for addition and multiplication gates, we can do the following:
\begin{alignat*}{2}
& \big((\alpha g+\beta h)(0),\,\widehat{\alpha g+\beta h}\big) \;&&=\; \big(\alpha g(0)+\beta h(0),\,\alpha \widehat g+\beta \widehat h\big)\;,\\
     & \big((\alpha g \cdot \beta h)(0),\,\widehat{\alpha g\cdot \beta h}\big) \;&& =\; \big(\alpha g(0)\cdot \beta h(0),\,\alpha g(0)\cdot \beta \widehat h + \alpha \widehat g\cdot \beta h(0) + \alpha \widehat g \cdot \beta \widehat h \big)\;.
\end{alignat*}

Therefore, an addition gate is just replaced by 2 addition gates, while a multiplication gate is replaced by 4 multiplication gates and 2 addition gates (and this gadget has depth 3).
Here we have no need to copy subformulas, and we re-use the computation instead.
In this construction additions happen only between constants or between non-constants, but never between a constant and a non-constant.
Therefore each maximal subcircuit of constant gates can be evaluated and replaced with a single constant gate $v$, and each of these gates is multiplied with a non-constant gate $w$ (with the one exception of the gate for $f(0)$).
This rescaling of the polynomial computed at $w$ can be simulated by just rescaling all the edge labels of the outgoing edges of $w$, so $v$ can be removed.
At the end we also remove the one remaining constant gate for $f(0)$ and are done.
\end{proof}
A circuit/formula that is the sum of an IHL circuit/formula and a field constant is called an IHL$^+$ circuit/formula.
The following corollary is obvious.
\begin{corollary}\label{cor:VPVf}
$\VP$ is the set of p-families for which the IHL$^+$ circuit size is polynomially bounded.
$\VF$ is the set of p-families for which the IHL$^+$ formula size is polynomially bounded.
\end{corollary}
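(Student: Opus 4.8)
The plan is to read off \Cref{cor:VPVf} directly from \Cref{lem:inputhomogenize} together with the observation that an IHL$^+$ circuit/formula is just a special kind of ordinary circuit/formula. First I would verify the inclusion ``polynomially bounded IHL$^+$ circuit (resp.\ formula) size $\Rightarrow$ membership in $\VP$ (resp.\ $\VF$)''. An IHL$^+$ circuit is the sum of a field constant and an IHL circuit; each homogeneous linear leaf of the latter can be rewritten by an ordinary subcircuit of $O(\nvar(f_n))$ gates, while the constant edge labels are absorbed into ordinary scalar-multiplication gates. Since $\nvar$ of a p-family lies in $\Pb$, this blow-up is only polynomial, so $f_n$ acquires an ordinary circuit (resp.\ formula) of polynomially bounded size.

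For the converse, given $(f_n)_{n\in\IN}\in\VP$ (resp.\ $\VF$) with circuit (resp.\ formula) size $s_n\in\Pb$, I would apply \Cref{lem:inputhomogenize} to obtain an IHL circuit of size $6s_n$ (resp.\ an IHL formula of size $p(s_n)$, still in $\Pb$ because $p$ is a fixed polynomial) computing $\widehat{f_n}$. Putting the single field constant $f_n(0)$ on top yields an IHL$^+$ circuit (resp.\ formula) of size $s_n + O(1)$ (resp.\ $p(s_n)+O(1)$) computing $\widehat{f_n} + f_n(0) = f_n$. Hence the IHL$^+$ circuit (resp.\ formula) complexity of $(f_n)_{n\in\IN}$ is polynomially bounded, and the two descriptions of $\VP$ (resp.\ $\VF$) coincide.

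I do not expect a genuine obstacle here: all of the content is already packed into \Cref{lem:inputhomogenize}. The only point that needs a word of care is the back-translation in the first step, where one must note that a homogeneous linear leaf expands into only polynomially many ordinary gates — which is precisely why one needs $\nvar$ of a p-family to be polynomially bounded — and that the edge scalars of a circuit can be simulated by ordinary gates (or, for formulas, pushed to the leaves, as in the paragraph preceding \Cref{lem:inputhomogenize}) without a superpolynomial blow-up.
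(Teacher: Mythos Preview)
Your proposal is correct and follows exactly the paper's approach: the paper's proof simply invokes \Cref{lem:inputhomogenize} to obtain an IHL circuit/formula for $\widehat f$ and then adds the constant $f(0)$ as the final operation. Your treatment of the easy direction is more careful than strictly necessary---in the paper's conventions leaves are already labelled by affine linear forms, so a homogeneous-linear leaf needs no expansion---but this does not affect correctness (and your minor slip of writing $s_n+O(1)$ instead of $6s_n+O(1)$ is clearly just a typo).
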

\begin{proof}
Use \ref{lem:inputhomogenize} to compute $\widehat f$.
The missing constant $f(0)$ can be added to the IHL circuit/formula as the very last operation.
\end{proof}

\subsection{IHL Ben-Or and Cleve is Exactly Kumar's complexity for \texorpdfstring{$3\times 3$}{3 x 3}  Matrices}

Quite surprisingly, the $3 \times 3$ matrix analogue of Kumar's complexity model (see \eqref{eq:Kc}) turns out to be the homogeneous version of Ben-Or and Cleve's construction \cite{BC92}, as the proof of the following \ref{pro:VFKumar} shows. Let $E_{i,j}$ denote the $3\times 3$ matrix with a $1$ at the entry $(i,j)$ and zeros elsewhere. Let $\idthree$ denote the $3 \times 3$ identity matrix.

\begin{proposition}
\label{pro:VFKumar}
Fix $i,j\in\{1,2,3\}$, $i\neq j$. Let $f$ be a polynomial admitting an IHL formula of depth~$\delta$. Then there exist $3 \times 3$ matrices $A_1 \vvirg A_r$ with $r \leq 4^\delta$ having \emph{homogeneous linear entries} such that
\[
f \cdot E_{i,j} \;=\; (\idthree+A_1)(\idthree+A_2)\cdots(\idthree+A_r) - \idthree\;.
\]
\end{proposition}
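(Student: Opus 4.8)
The plan is to induct on the depth $\delta$ of the IHL formula, maintaining throughout the invariant that a polynomial $g$ computed by an IHL formula of depth $\delta'$ can be realized as $g \cdot E_{i,j}$ for \emph{any} choice of off-diagonal pair $(i,j)$ via a product of at most $4^{\delta'}$ factors of the form $\idthree + A_k$ with homogeneous linear $A_k$ (note $E_{i,j}^2 = 0$ for $i \neq j$, so $(\idthree + g E_{i,j}) = \prod$ of one such factor handles the base case $\delta' = 0$, where $g$ is a homogeneous linear form). The key algebraic identities are the commutator-style tricks of Ben-Or and Cleve: for the product $g \cdot h$, one uses that a suitable alternating product of unipotent matrices supported on $E_{i,k}, E_{k,j}, E_{j,i}$ (or on a third index) telescopes so that the only surviving term is $gh \cdot E_{i,j}$; for the sum $g + h$, one simply concatenates the product realizing $g \cdot E_{i,j}$ with the product realizing $h \cdot E_{i,j}$, since $(\idthree + g E_{i,j})(\idthree + h E_{i,j}) = \idthree + (g+h) E_{i,j}$ again because $E_{i,j}^2 = 0$.

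First I would handle the addition gate: if $f = g + h$ with $g, h$ computed by IHL subformulas of depth $\leq \delta - 1$, apply the inductive hypothesis to each with the \emph{same} target entry $(i,j)$, obtaining products of length $\leq 4^{\delta-1}$ each; concatenating gives length $\leq 2 \cdot 4^{\delta-1} \leq 4^\delta$, and the $E_{i,j}^2 = 0$ identity collapses the concatenation to $\idthree + (g+h)E_{i,j}$. Next, the multiplication gate: for $f = g \cdot h$ pick a third index $k \notin \{i,j\}$ (possible since we work with $3 \times 3$ matrices — this is exactly where the ``$3$'' is used). Write $g \cdot E_{i,k}$ and $h \cdot E_{k,j}$ using the inductive hypothesis (each of length $\leq 4^{\delta-1}$), and also $-g \cdot E_{i,k}$ and $-h \cdot E_{k,j}$ — here we need that if $g$ is realizable then so is $-g$, which follows by applying the inductive hypothesis to the formula with its leaves negated (leaves are homogeneous linear, so negation keeps them homogeneous linear, and depth is unchanged). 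Then the Ben-Or--Cleve identity expresses $g h \cdot E_{i,j}$ as the ordered product of these four blocks (exploiting $E_{i,k}E_{k,j} = E_{i,j}$, $E_{i,k}E_{i,k} = 0$, $E_{k,j}E_{k,j} = 0$, and the cross terms $E_{i,k}E_{i,k}$, $E_{k,j}E_{i,k}$ etc. vanishing), giving total length $\leq 4 \cdot 4^{\delta-1} = 4^\delta$.

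The step I expect to be the main obstacle is verifying that the multiplication identity works cleanly in the IHL setting with the width-$3$ bookkeeping: one must check that when we substitute an entire product-of-unipotents block (realizing, say, $g \cdot E_{i,k}$) into the Ben-Or--Cleve commutator scheme in place of a single elementary unipotent factor, the telescoping still occurs. This requires that the block, call it $M_g$, satisfies $M_g = \idthree + g E_{i,k} + (\text{terms killed by the surrounding } E\text{'s})$ — i.e.\ it is not literally $\idthree + g E_{i,k}$ but agrees with it after multiplication on the appropriate sides by the other blocks. I would organize this by observing that the product realizing $g \cdot E_{i,k}$, when one tracks which matrix entries can ever become nonzero, lives in the subalgebra where the relevant products vanish; concretely one proves the stronger inductive statement that the realizing product equals $\idthree + g E_{i,k}$ \emph{exactly} (not just up to harmless terms), which is already what the base case and the addition case deliver, and then the multiplication identity is a finite matrix computation: expand $(\idthree + g E_{i,k})(\idthree + h E_{k,j})(\idthree - g E_{i,k})(\idthree - h E_{k,j})$ — wait, this particular four-term product does not by itself give $\idthree + gh E_{i,j}$, so I would instead use the standard three-index conjugation $(\idthree + gE_{i,k})(\idthree + hE_{k,j})(\idthree + gE_{i,k})^{-1}(\idthree + hE_{k,j})^{-1}$ and note $(\idthree + gE_{i,k})^{-1} = \idthree - gE_{i,k}$ since $E_{i,k}^2 = 0$, so the inverses are themselves of the required form; the commutator then equals $\idthree + gh(E_{i,j} - E_{k,k}\text{-type correction})$, and a final conjugation or a careful choice of index pattern (following \cite{BC92} adapted to the homogeneous, constant-free regime) removes the diagonal correction. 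Assembling the three cases and the bound $4^\delta$ completes the induction.
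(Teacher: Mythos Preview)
Your approach is essentially the paper's: induct on depth, strengthen the hypothesis to allow any off-diagonal target $(i,j)$ and both signs, handle addition by concatenation (using $E_{i,j}^2=0$), and handle multiplication via the Ben-Or--Cleve commutator with a third index $k$. The paper organizes the sign issue by carrying both $+g$ and $-g$ through the induction simultaneously rather than by negating leaves, but this is cosmetic.

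Your only real misstep is the hesitation at the end. The four-term product
\[
(\idthree + g E_{i,k})(\idthree + h E_{k,j})(\idthree - g E_{i,k})(\idthree - h E_{k,j})
\]
\emph{does} equal $\idthree + gh\,E_{i,j}$ exactly, with no diagonal correction. A direct expansion shows: the first pair gives $\idthree + gE_{i,k} + hE_{k,j} + gh E_{i,j}$, the second pair gives $\idthree - gE_{i,k} - hE_{k,j} + gh E_{i,j}$, and when you multiply these the linear terms cancel, the only surviving quadratic term is $-gh\,E_{i,k}E_{k,j} = -gh\,E_{i,j}$ (all other products $E_{a,b}E_{c,d}$ with $b\neq c$ vanish since $i,j,k$ are distinct), and the constant-in-$E$ contribution is $2gh\,E_{i,j}$; the net is $\idthree + gh\,E_{i,j}$. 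So no further conjugation or index gymnastics is needed, and your strengthened induction hypothesis (that the realizing product equals $\idthree + gE_{i,k}$ \emph{exactly}) goes through cleanly for the multiplication gate as well.
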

\begin{proof}
Consider the six positions $\{(i,j) \mid 1 \leq i,j \leq 3, \ i \neq j\}$ of the zeros in the $3\times 3$ unit matrix.
Given an IHL formula, to each input gate and to each computation gate we assign one of the 6 positions in the following way.
We start at the root and assign it position $(i,j)$.
We proceed by assigning position labels recursively:
For a summation gate with position $(i',j')$, both summands get position $(i',j')$. For a product gate with position $(i',j')$, one factor gets position $(i',k)$ and the other gets position $(k,j')$, $k\neq i'$, $k \neq j'$.
We now prove by induction on the depth $D$ of the gate $g$ (the depth of a gate it the depth of its subformula: the input have depth $0$; the root has the highest depth) with position $(i',j')$ that for each gate there is a list of at most $4^D$ matrices $(A_1,\ldots,A_{r})$ such that
\[
(\idthree+A_1)(\idthree+A_2)\cdots(\idthree+A_{r}) = \idthree + g E_{(i',j')}
\]
and the same number of matrices $B_1,\ldots,B_r$ such that
\[
(\idthree+B_1)(\idthree+B_2)\cdots(\idthree+B_{r}) = \idthree - g E_{(i',j')}.
\]
For an input gate (i.e., depth 0) with position $(i',j')$ and input label $\ell$, we set $A_1 := \ell \cdot E_{i',j'}$ and $B_1 := -\ell \cdot E_{i',j'}$.
For an addition gate with position $(i',j')$ let $(A_1,\ldots,A_r)$, $(B_1,\ldots,B_r)$ and $(A'_1,\ldots,A'_{r'})$, $(B'_1,\ldots,B'_{r'})$ be the lists coming from the induction hypothesis.
We define the list for the addition gate as the concatenations $(A_1,\ldots,A_r,A'_1,\ldots,A'_{r'})$
and
$(B_1,\ldots,B_r,B'_1,\ldots,B'_{r'})$.
Observe that
\[
(\idthree + f E_{(i',j')}) \cdot (\idthree + g E_{(i',j')}) \;=\; \idthree + (f+g)E_{(i',j')}\;,\]
and
\[(\idthree - f E_{(i',j')}) \cdot (\idthree - g E_{(i',j')}) \;=\; \idthree - (f+g)E_{(i',j')}\;.\]
so this case is correct.
For a product gate with position $(i',j')$
let $(A_1,\ldots,A_r)$, $(B_1,\ldots,B_r)$ and $(A'_1,\ldots,A'_{r'})$, $(B'_1,\ldots,B'_{r'})$ be the lists coming from the induction hypothesis,
i.e.,
$(\idthree+A_1)(\idthree+A_2)\cdots(\idthree+A_{r}) = \idthree + f E_{(i',k)}$,
$(\idthree+B_1)(\idthree+B_2)\cdots(\idthree+B_{r}) = \idthree - f E_{(i',k)}$,
$(\idthree+A'_1)(\idthree+A'_2)\cdots(\idthree+A'_{r}) = \idthree + g E_{(k,j')}$,
$(\idthree+B'_1)(\idthree+B'_2)\cdots(\idthree+B'_{r}) = \idthree - g E_{(k',j')}$.
Observe that
\[
\big(\idthree + f E_{(i',k)}\big)
\big(\idthree + g E_{(k,j')}\big)
\big(\idthree - f E_{(i',k)}\big)
\big(\idthree - g E_{(k,j')}\big)
=
\idthree + fg E_{(i',j')}
\]
and analogously
\[
\big(\idthree - f E_{(i',k)}\big)
\big(\idthree + g E_{(k,j')}\big)
\big(\idthree + f E_{(i',k)}\big)
\big(\idthree - g E_{(k,j')}\big)
=
\idthree - fg E_{(i',j')}.
\]
For illustration, in the notation of \cite{BIZ18} the product with position (1,3) can be depicted as follows.
\begin{center}
\begin{tikzpicture}[scale=0.9]
\node[circle,fill=black,inner sep=0pt, minimum size = 0.5cm,minimum size=5pt] (00) at (0,0) {};
\node[circle,fill=black,inner sep=0pt, minimum size = 0.5cm,minimum size=5pt] (01) at (0,1) {};
\node[circle,fill=black,inner sep=0pt, minimum size = 0.5cm,minimum size=5pt] (02) at (0,2) {};
\node[circle,fill=black,inner sep=0pt, minimum size = 0.5cm,minimum size=5pt] (10) at (1,0) {};
\node[circle,fill=black,inner sep=0pt, minimum size = 0.5cm,minimum size=5pt] (11) at (1,1) {};
\node[circle,fill=black,inner sep=0pt, minimum size = 0.5cm,minimum size=5pt] (12) at (1,2) {};
\node[circle,fill=black,inner sep=0pt, minimum size = 0.5cm,minimum size=5pt] (20) at (2,0) {};
\node[circle,fill=black,inner sep=0pt, minimum size = 0.5cm,minimum size=5pt] (21) at (2,1) {};
\node[circle,fill=black,inner sep=0pt, minimum size = 0.5cm,minimum size=5pt] (22) at (2,2) {};
\node[circle,fill=black,inner sep=0pt, minimum size = 0.5cm,minimum size=5pt] (30) at (3,0) {};
\node[circle,fill=black,inner sep=0pt, minimum size = 0.5cm,minimum size=5pt] (31) at (3,1) {};
\node[circle,fill=black,inner sep=0pt, minimum size = 0.5cm,minimum size=5pt] (32) at (3,2) {};
\node[circle,fill=black,inner sep=0pt, minimum size = 0.5cm,minimum size=5pt] (40) at (4,0) {};
\node[circle,fill=black,inner sep=0pt, minimum size = 0.5cm,minimum size=5pt] (41) at (4,1) {};
\node[circle,fill=black,inner sep=0pt, minimum size = 0.5cm,minimum size=5pt] (42) at (4,2) {};
\node[circle,fill=black,inner sep=0pt, minimum size = 0.5cm,minimum size=5pt] (50) at (5,0) {};
\node[circle,fill=black,inner sep=0pt, minimum size = 0.5cm,minimum size=5pt] (51) at (5,1) {};
\node[circle,fill=black,inner sep=0pt, minimum size = 0.5cm,minimum size=5pt] (52) at (5,2) {};
\node[circle,fill=black,inner sep=0pt, minimum size = 0.5cm,minimum size=5pt] (60) at (6,0) {};
\node[circle,fill=black,inner sep=0pt, minimum size = 0.5cm,minimum size=5pt] (61) at (6,1) {};
\node[circle,fill=black,inner sep=0pt, minimum size = 0.5cm,minimum size=5pt] (62) at (6,2) {};
\node (c) at (4.5,1) {$=$};
\draw (00) -- (40);
\draw (01) -- (41);
\draw (02) -- (42);
\draw (50) -- (60);
\draw (51) -- (61);
\draw (52) -- (62);
\draw (02) -- (11) node[inner sep=1pt,midway,fill=white] {$f$};
\draw (11) -- (20) node[inner sep=1pt,midway,fill=white] {$g$};
\draw (22) -- (31) node[inner sep=1pt,midway,fill=white] {$-f$};
\draw (31) -- (40) node[inner sep=1pt,midway,fill=white] {$-g$};
\draw (52) -- (60) node[inner sep=1.5pt,near end,fill=white] {$fg$};
\end{tikzpicture}
\end{center}
Since $4 \cdot 4^{D-1} = 4^D$, the size bound is satisfied.
\end{proof}

Since the trace of a matrix can sometimes be preferrable to the $(i,j)$-entry,
we present the result with the trace, provided approximations are allowed.
\begin{proposition}
\label{pro:usethetrace}
For every IHL formula of depth $\delta$ there exist $\leq 4^\delta$ many
$3\times 3$ matrices $A_i$ with 
\emph{homogeneous linear entries} over $\IC[\eps,\eps^{-1}]$
and $\alpha \in \IC[\eps,\eps^{-1}]$
such that
\[
E_{1,1} \cdot f = \lim_{\eps\to0} \Big( \alpha \big((\idthree+A_1)(\idthree+A_2)\cdots(\idthree+A_r) - \idthree\big)\Big)
\]
and hence
\[
f = \lim_{\eps\to0} \textup{tr} \Big( \alpha \big((\idthree+A_1)(\idthree+A_2)\cdots(\idthree+A_r) - \idthree\big)\Big).
\]
\end{proposition}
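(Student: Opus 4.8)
The plan is to derive \ref{pro:usethetrace} from \ref{pro:VFKumar} by a one-parameter degeneration. By \ref{pro:VFKumar} applied with the off-diagonal index pair $(1,2)$ we obtain matrices $A_1,\dots,A_{r_0}$ with homogeneous linear entries, $r_0\le 4^\delta$, such that $(\idthree+A_1)\cdots(\idthree+A_{r_0})=\idthree+fE_{1,2}$. The whole content is then to move $f$ out of the off-diagonal slot $(1,2)$ onto the diagonal, so that it is detected by the trace. The first thing I would record is why the naive idea fails: one cannot conjugate $\idthree+fE_{1,2}$ by a fixed (or $\eps$-dependent) invertible matrix to turn $E_{1,2}$ into $E_{1,1}$, since conjugation preserves the trace and $\operatorname{tr}(fE_{1,2})=0\neq\operatorname{tr}(fE_{1,1})$. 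Hence a genuine degeneration is unavoidable, which is exactly what the parameter $\eps$, the limit, and the rescaling by $\alpha\in\IC[\eps,\eps^{-1}]$ are there to provide.

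The construction I would use: conjugate the product by a degenerating one-parameter family $g_\eps$ with constant-in-$\boldsymbol x$ entries in $\IC[\eps,\eps^{-1}]$ (for instance $g_\eps$ a lower-triangular unipotent with off-diagonal entry $\eps$), and then compose with a bounded number of ``closing'' factors of the form $\idthree+\eps^{\pm1}\ell\,E_{i,j}$ with $\ell$ homogeneous linear, arranged so that after subtracting $\idthree$ and multiplying by a suitable $\alpha=\alpha(\eps)$ the $(1,1)$-entry tends to $f$ and every other entry tends to $0$. The diagonal term $f$ and the conjugate of $E_{1,2}$ occur at different powers of $\eps$, so the choice of $\alpha$ and of the partner matrices coming from the ``$\pm$''-version in the proof of \ref{pro:VFKumar} must be matched so that the $\eps^{-1}$-scaled off-diagonal contributions cancel in the limit. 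Because $g_\eps$ has constant entries, $g_\eps A_kg_\eps^{-1}$ still has homogeneous linear entries (now over $\IC[\eps,\eps^{-1}]$), and the closing factors add only $O(1)$ matrices. To keep $r\le 4^\delta$ I would use that the recursion of \ref{pro:VFKumar} saturates the bound $4^\delta$ only at product root gates: for a product root $f=g_1g_2$ one runs the argument on the two depth-$(\le\delta-1)$ subformulas (each costing $\le 4^{\delta-1}$ matrices) and splices them with the closing gadget, which leaves ample room; for an addition root the recursion already leaves slack $2\cdot 4^{\delta-1}$.

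Once the identity $E_{1,1}\cdot f=\lim_{\eps\to0}\alpha((\idthree+A_1)\cdots(\idthree+A_r)-\idthree)$ is in place, the second identity is immediate: the trace is a continuous linear functional on the coefficient vector space, so it commutes with the limit, and $\operatorname{tr}(fE_{1,1})=f$.

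I expect the main obstacle to be the second step, namely engineering $g_\eps$ together with the closing gadget so that the limit is \emph{exactly} $fE_{1,1}$ rather than $fE_{1,1}$ plus a residual nilpotent matrix: closing the walk $1\to2\to1$ naively by multiplying $\idthree+fE_{1,2}$ with $\idthree+\ell\,E_{2,1}$ puts $\ell f$ — not $f$ — on the diagonal and leaves an $\eps^{-1}$-scaled entry in position $(2,1)$, so one has to absorb the stray homogeneous linear form into the $\eps$-scaling in a way that simultaneously keeps all entries bounded and does not push the matrix count past $4^\delta$. If one only needs the trace statement this step is considerably easier, since a residual nilpotent matrix is invisible to $\operatorname{tr}$.
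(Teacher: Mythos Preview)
Your proposal has a genuine gap at precisely the point you yourself flag as the ``main obstacle''. Starting from the product $\idthree + fE_{1,2}$ supplied by \ref{pro:VFKumar} and then trying to move $f$ onto the diagonal cannot succeed along the lines you sketch. Conjugation by any $g_\eps\in\GL_3(\IC[\eps,\eps^{-1}])$ preserves the trace for every nonzero~$\eps$, so the conjugated product still has trace~$3$, its difference with $\idthree$ has trace~$0$, and scaling by any $\alpha(\eps)$ keeps the trace identically~$0$; the limit of the trace is $0$, not~$f$. Closing factors $\idthree+\ell E_{i,j}$ do break trace-invariance, but since $\ell$ must be homogeneous linear they can only place $f\ell$ (or higher-degree multiples of $f$) on the diagonal, never $f$ itself, and no scalar $\alpha(\eps)$ will strip off a variable. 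You diagnose this correctly but offer no fix; your closing remark that ``a residual nilpotent matrix is invisible to $\textup{tr}$'' does not help, because the problem is not a nilpotent remainder but the wrong diagonal content.

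The paper's argument avoids this by never forming $\idthree+fE_{1,2}$ at all. It expands the top of the formula into a sum of products $f=\sum_a g_a h_a$ (at most $2^\delta$ summands), applies \ref{pro:VFKumar} to each \emph{factor} separately to realise $\pm\eps g_a$ at position $(1,2)$ and $\pm\eps h_a$ at position $(2,1)$, and then uses the group-commutator identity
\[
(\idthree+\eps g E_{1,2})(\idthree+\eps h E_{2,1})(\idthree-\eps g E_{1,2})(\idthree-\eps h E_{2,1})\;=\;\idthree+\eps^2 gh\,E_{1,1}+O(\eps^3).
\]
Multiplying these blocks over $a$ accumulates $\eps^2\sum_a g_a h_a=\eps^2 f$ in the $(1,1)$-slot, and $\alpha=\eps^{-2}$ finishes. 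The point your plan misses is that the r\^ole of the ``closing factor'' must be played by a genuine subformula $h_a$, not by a fresh linear form; this is only available if you exploit the product structure of the formula rather than treating $f$ as a monolithic polynomial already sitting at position $(1,2)$. Your aside about handling a ``product root $f=g_1g_2$'' by running the argument on the two subformulas is gesturing in exactly this direction, but it is presented as a bookkeeping device for the matrix count rather than as the actual construction, and the required commutator gadget is never written down.
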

\begin{proof}

The IHL formula is a sum of products of subformulas $g_1 \cdot h_1$, $g_2 \cdot h_2$, $\ldots$, $g_r \cdot h_r$, and $r\leq 2^\delta$ by induction.
We compute subformulas for $\eps g_1$, $-\eps g_1$, $\eps h_1$, $-\eps h_1$, $\eps g_2$, $-\eps g_2$, \ldots, $-\eps h_r$ as in the proof of \ref{pro:VFKumar} with position $(1,2)$ for each $\pm \eps g_i$ and position $(2,1)$ for each $\pm \eps h_i$.
It turns out that
\[
M_a := (\idthree+ \eps g_a E_{1,2})(\idthree+ \eps h_a E_{2,1})
(\idthree- \eps g_a E_{1,2})(\idthree- \eps h_a E_{2,1})
=
\idthree+ \eps^2 f_a g_a E_{1,1} + O(\eps^3).
\]
Pictorially:
\begin{center}
\begin{tikzpicture}[scale=0.95]
\node[circle,fill=black,inner sep=0pt, minimum size = 0.5cm,minimum size=5pt] (00) at (0,0) {};
\node[circle,fill=black,inner sep=0pt, minimum size = 0.5cm,minimum size=5pt] (01) at (0,1) {};
\node[circle,fill=black,inner sep=0pt, minimum size = 0.5cm,minimum size=5pt] (02) at (0,2) {};
\node[circle,fill=black,inner sep=0pt, minimum size = 0.5cm,minimum size=5pt] (10) at (1,0) {};
\node[circle,fill=black,inner sep=0pt, minimum size = 0.5cm,minimum size=5pt] (11) at (1,1) {};
\node[circle,fill=black,inner sep=0pt, minimum size = 0.5cm,minimum size=5pt] (12) at (1,2) {};
\node[circle,fill=black,inner sep=0pt, minimum size = 0.5cm,minimum size=5pt] (20) at (2,0) {};
\node[circle,fill=black,inner sep=0pt, minimum size = 0.5cm,minimum size=5pt] (21) at (2,1) {};
\node[circle,fill=black,inner sep=0pt, minimum size = 0.5cm,minimum size=5pt] (22) at (2,2) {};
\node[circle,fill=black,inner sep=0pt, minimum size = 0.5cm,minimum size=5pt] (30) at (3,0) {};
\node[circle,fill=black,inner sep=0pt, minimum size = 0.5cm,minimum size=5pt] (31) at (3,1) {};
\node[circle,fill=black,inner sep=0pt, minimum size = 0.5cm,minimum size=5pt] (32) at (3,2) {};
\node[circle,fill=black,inner sep=0pt, minimum size = 0.5cm,minimum size=5pt] (40) at (4,0) {};
\node[circle,fill=black,inner sep=0pt, minimum size = 0.5cm,minimum size=5pt] (41) at (4,1) {};
\node[circle,fill=black,inner sep=0pt, minimum size = 0.5cm,minimum size=5pt] (42) at (4,2) {};
\node[circle,fill=black,inner sep=0pt, minimum size = 0.5cm,minimum size=5pt] (50) at (5,0) {};
\node[circle,fill=black,inner sep=0pt, minimum size = 0.5cm,minimum size=5pt] (51) at (5,1) {};
\node[circle,fill=black,inner sep=0pt, minimum size = 0.5cm,minimum size=5pt] (52) at (5,2) {};
\node[circle,fill=black,inner sep=0pt, minimum size = 0.5cm,minimum size=5pt] (60) at (6,0) {};
\node[circle,fill=black,inner sep=0pt, minimum size = 0.5cm,minimum size=5pt] (61) at (6,1) {};
\node[circle,fill=black,inner sep=0pt, minimum size = 0.5cm,minimum size=5pt] (62) at (6,2) {};
\node (c) at (4.5,1) {$=$};
\node (eps) at (7,1) {$+O(\eps^3)$};
\draw (00) -- (40);
\draw (01) -- (41);
\draw (02) -- (42);
\draw (50) -- (60);
\draw (51) -- (61);
\draw (52) -- (62) node[inner sep=1.5pt,midway,above] {\scalebox{0.75}{$1+\eps^2 h_a g_a$}};
\draw (02) -- (11) node[inner sep=1.5pt,midway,fill=white] {$\eps g_a$};
\draw (11) -- (22) node[inner sep=1.5pt,midway,fill=white] {$\eps h_a$};
\draw (22) -- (31) node[inner sep=1.5pt,midway,fill=white] {$-\eps g_a$};
\draw (31) -- (42) node[inner sep=1.5pt,midway,fill=white] {$-\eps h_a$};
\end{tikzpicture}
\end{center}
Hence $M_1 M_2 \cdots M_r = \idthree + \eps^2(h_1g_1+h_2g_2+\cdots+h_rg_r)E_{1,1} + O(\eps^3)$.
We choose $\alpha = \eps^{-2}$.
\end{proof}

Recall, $\nc e_{n,d}(X_1, \dots, X_n) \;:=\; \sum_{1 \leq I_1 < I_2 <\dots < I_d \leq n} X_{I_1} \dots X_{I_d}$,
is the elementary symmetric polynomial in noncommuting variables $X_1,\ldots,X_n$.
For any $L : \IC^{3\times 3}\to \IC$,
let $\nc e_{L,n,d} := L \circ \nc e_d(A_1,A_2,\ldots,A_n)$, where each
$A_i=\left(\begin{smallmatrix}
0&x_{1,2,i}&x_{1,3,i}
\\
x_{2,1,i}&0&x_{2,3,i}
\\
x_{3,1,i}&x_{3,2,i}&0
\end{smallmatrix}\right)$
is a $3\times 3$ matrix of 6 fresh variables.
We denote by $\nc e_L$ the corresponding graded p-family.
To be formally precise, we set $\nc e_{L,n,0}=1$.
In particular, $L$ can be taken to be the trace.

\begin{corollary}
\label{cor:hardforVFH}
Fix any nonzero linear form $L$ on the space of $3 \times 3$ matrices.
If $L$ is supported outside the main diagonal, then
the graded p-family $\nc e_L$ is $\VF$-complete under homogeneous linear projections.
If $L$ is supported on the main diagonal, then
the graded p-family $\nc e_L$ is $\VF$-complete under homogeneous linear border projections.

\end{corollary}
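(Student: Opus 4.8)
The plan is to prove the two halves of \ref{cor:hardforVFH} by combining \ref{pro:VFKumar} (for the off-diagonal case) and \ref{pro:usethetrace} (for the diagonal case) with the equivalence between $\VF$ and the IHL$^+$ formula model from \ref{cor:VPVf}, and then checking the routine ``$\varphi(f)\subseteq\VF$'' direction. Throughout I will use the fact that the noncommutative elementary symmetric polynomial $\nc e_d(A_1,\dots,A_n)$ is exactly the degree-$d$ homogeneous part (in the $x$-variables) of the product $(\idthree+A_1)\cdots(\idthree+A_n)$ where $A_i$ is the off-diagonal matrix of $6$ fresh variables; hence $\nc e_{L,n,d}$ is the degree-$d$ part of $L$ applied to that product.

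\textbf{Hardness, off-diagonal case.} Let $g\in\VF$ be an (ungraded) p-family; we must show $g\leq_{\textup{p-homlin}}\nc e_L$. Fix $n$ and write $d:=\deg(g_n)$. By \ref{cor:VPVf} there is an IHL$^+$ formula of size $s=\poly(n)$ computing $g_n$, i.e.\ $g_n=\widehat{g_n}+g_n(0)$ with $\widehat{g_n}$ computed by an IHL formula; by Brent's depth reduction (\ref{lem:inputhomogenize}) we may take this IHL formula to have depth $\delta=O(\log s)=O(\log n)$. Apply \ref{pro:VFKumar} with the position $(i,j)$ on which $L$ is supported (if $L$ is supported on several off-diagonal entries, pick one and use that the corresponding matrix entry of the product, restricted to degree $d$, still recovers $\widehat{g_n}$ after the homogeneous-linear substitution sending the unused off-diagonal variables to $0$): we obtain $r\le 4^\delta=\poly(n)$ matrices $A_1,\dots,A_r$ with homogeneous linear entries such that $\widehat{g_n}\cdot E_{i,j}=(\idthree+A_1)\cdots(\idthree+A_r)-\idthree$. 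Taking the degree-$d$ homogeneous part and then applying $L$ (which picks out the $(i,j)$ coefficient, up to a nonzero scalar) yields $\widehat{g_n}$, homogeneous of degree $d$, as a homogeneous linear projection of $\nc e_{L,r,d}$. The constant $g_n(0)$ is absorbed because $\nc e_{L,n,d}$ with $d=\deg g_n>0$ already has no constant term, and if $g_n$ happens to be a constant we use the $d=0$ convention $\nc e_{L,n,0}=1$. Since $r=\poly(n)$ this gives $g\leq_{\textup{p-homlin}}\nc e_L$. For the border / quasipolynomial variants one repeats this verbatim, appealing instead to the border or quasipolynomial versions of $\VF$; since \ref{pro:VFKumar} is an \emph{exact} identity, no approximation is introduced in the off-diagonal case.

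\textbf{Hardness, diagonal case.} Here $L$ is a nonzero combination of $E_{1,1},E_{2,2},E_{3,3}$; after a permutation of coordinates and rescaling we may assume its $(1,1)$-coefficient is nonzero. Run the same reduction, but now invoke \ref{pro:usethetrace} in place of \ref{pro:VFKumar}: for the depth-$\delta$ IHL formula computing $\widehat{g_n}$ we get $\le 4^\delta=\poly(n)$ matrices $A_i$ with homogeneous linear entries over $\IC[\eps,\eps^{-1}]$ and a scalar $\alpha=\eps^{-2}$ with $E_{1,1}\cdot\widehat{g_n}=\lim_{\eps\to0}\alpha\big((\idthree+A_1)\cdots(\idthree+A_r)-\idthree\big)$. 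Extracting the degree-$d$ part (in the $x$-variables) commutes with the limit, and applying $L$ extracts a nonzero multiple of the $(1,1)$-entry plus bounded multiples of the $(2,2),(3,3)$-entries — but by construction the limit matrix $\widehat{g_n}E_{1,1}$ has its only nonzero entry at $(1,1)$, so $L$ of it equals (a nonzero scalar times) $\widehat{g_n}$. Thus $\widehat{g_n}\trianglelefteq_{\textup{homlin}}\nc e_{L,r,d}$, and then $g\trianglelefteq_{\textup{p-homlin}}\nc e_L$ (and similarly in the quasipolynomial border setting). Note the approximation is genuinely needed: on the main diagonal there is no exact IHL identity of Ben-Or--Cleve type, which is why this case only gives border completeness.

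\textbf{The easy direction.} It remains to check $\varphi(\nc e_L)\subseteq\VF$ (in the appropriate, possibly border/quasipolynomial, sense), which is what upgrades ``hard'' to ``complete''. An element of $\varphi(\nc e_L)$ is $n\mapsto\sum_{i=0}^{d_n}a_{n,i}\,\nc e_{L,m_n,i}$ with $m,d\in\Pb$. Each $\nc e_{L,m,i}$ is a coordinate (or trace) of a product of $m$ matrices of size $3\times 3$ with linear entries, hence computable by an iterated $3\times 3$ matrix multiplication of length $m$, which has a formula of size $\poly(m)$ (expand the $\le 3^{m}$... no: use the standard divide-and-conquer, $\mathrm{size}(m)\le 2\,\mathrm{size}(m/2)\cdot O(1)$ entries, giving $m^{O(1)}$; this is exactly why width-$3$ IMM is in $\VF$). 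Taking the homogeneous degree-$i$ part, then the fixed linear combination over $i\le d_n=\poly(n)$, keeps the formula size polynomial. Hence the associated ungraded p-family lies in $\VF$, and in the border case the same computation through $\IC[\eps,\eps^{-1}]$ followed by $\eps\to0$ places it in $\overline{\VF}$; the quasipolynomial statements are identical with $m\in\Pb$ replaced by $m$ quasipolynomially bounded. Combining the two directions gives all the claimed completeness statements.

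\textbf{Main obstacle.} The delicate point is \emph{bookkeeping of constants and of the degree grading simultaneously}: one must verify that ``extract the degree-$d$ part of the $x$-variables'' commutes both with applying the $\eps$-limit and with applying the linear form $L$, and that the conventions $\widehat{\,\cdot\,}$, $f(0)$, and $\nc e_{L,n,0}=1$ mesh so that a possibly-constant or low-degree $g_n$ is still captured. This is exactly the kind of ``very formal'' subtlety the paper warns about in \S\oldref{sec:gradedpfamilies}; everything else is a direct assembly of \ref{cor:VPVf}, \ref{lem:inputhomogenize}, \ref{pro:VFKumar} and \ref{pro:usethetrace}.
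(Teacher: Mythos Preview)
Your proposal follows the paper's own route: pass to logarithmic-depth IHL formulas via \ref{lem:inputhomogenize}, invoke \ref{pro:VFKumar} (off-diagonal) or \ref{pro:usethetrace} (diagonal), and extract the degree-$d$ part of the matrix product as a (border) homogeneous linear projection of $\nc e_{L,r,d}$. Two points need tightening. First, in the diagonal case you silently drop the scalar $\alpha=\eps^{-2}$ from \ref{pro:usethetrace}: what you have shown is $g_{n,d}=\lim_{\eps\to0}\alpha\cdot\nc e_{L,r,d}(\text{substitution})$, which is not yet a limit of homogeneous linear projections. The paper absorbs $\alpha$ into the matrices using the degree-$d$ homogeneity of $\nc e_d$, substituting $\eps\mapsto\eps^d$ first so that the required $d$-th root $\sqrt[d]{\beta}\,\eps^k$ remains a Laurent monomial; you need some version of this step. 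Second, your phrasing ``fix $d:=\deg(g_n)$ \ldots yields $\widehat{g_n}$, homogeneous of degree~$d$'' is off: $\widehat{g_n}$ need not be homogeneous, and the definition of $\leq_{\textup{p-homlin}}$ requires a single $m_n$ working for \emph{all} $d$. Your construction actually delivers this (the same product of $r(n)$ matrices yields every $g_{n,d}$ upon taking the degree-$d$ part), so this is a wording issue rather than a real gap; the paper instead treats each homogeneous component $g_{n,d}$ separately. Your handling of the $\varphi(\nc e_L)\subseteq\VF$ direction is a welcome addition, as the paper's proof leaves it implicit.
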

\begin{proof}
Given a ungraded p-family $g\in\VF$.
We apply Brent's depth reduction and then \ref{lem:inputhomogenize} to every homogeneous component of every $g_n$ to obtain IHL formulas $f_{n,d}$ of logarithmic depth and polynomial size in~$n$ ($d$ is polynomial in $n$).
The first case is treated with \ref{pro:VFKumar}, the second is treated completely analogously with \ref{pro:usethetrace}.
We only handle the slightly more difficult second case.
We obtain $4^{O(\log n)}  =\poly(n)$ many matrices $A_i$ with
\[
f_n = \lim_{\eps\to0} L\Big( \alpha \big((\idthree+A_1)(\idthree+A_2)\cdots(\idthree+A_r) - \idthree\big)\Big)
\]
Note that $\alpha\in\IC[\eps,\eps^{-1}]$ can be assumed to be a scalar times a power of $\eps$, because lower order terms have no effect on the limit.
Since $f_{n,d}$ is homogeneous of degree $d$, we have
\[
f_{n,d} = \lim_{\eps\to0} L\Big( \alpha \, \nc e_{n,d}(A_1,\ldots,A_r)\Big)
= \lim_{\eps\to0} L\Big(  \nc e_{n,d}(\sqrt[d]{\beta}\eps^k A'_1,\ldots,\alpha \sqrt[d]{\beta}\eps^k A'_r)\Big)
\]
where $A'_i$ arises from $A_i$ by replacing every $\eps$ by $\eps^d$.
\end{proof}

While \ref{cor:hardforVFH} gives the first collection that is $\VF$-complete under homogeneous linear projection, we found simpler polynomials with similar properties.
In the next sections we will prove that the parity-alternating elementary symmetric polynomial is
hard for the class $\VTF$ under homogeneous linear projections, which gives a polynomial that is just barely more complicated than the elementary symmetric polynomial.

\subsection{IHL Computation with Arity 3 Products}
\label{subsec:aritythree}
In the light of \cite{BIZ18} we now study the $2 \times 2$ analogues of \ref{pro:VFKumar}, \ref{pro:usethetrace}, \ref{cor:hardforVFH}.
In order to do so, in this section we study IHL formulas and circuits where the additions have arity 2, but the products have \emph{arity exactly 3}.
We call this basis the \emph{arity 3 basis}.
This turns out to be rather subtle, because one would usually want to simulate an arity 2 product by an arity 3 product in which one of the factors is a constant 1, but that violates the IHL property.
A circuit/formula of this type is called an \emph{IHL circuit/formula over the arity 3 basis}.
If a polynomial is computed by an IHL formula or circuit over the arity 3 basis, then all its homogeneous even-degree components are zero, hence we have to adjust this definition slightly: For an even degree homogeneous polynomial we want to compute all partial derivatives instead.
Formally, a \emph{graded IHL circuit/formula over the arity 3 basis} is a circuit/formula of the following syntactic structure:
\[
f \ = \ f(0) \ + \ \sum_{d\in 2\IN+1} \underbrace{f_{d}}_{\textup{IHL, arity 3}}
 \ + \ 
\sum_{\substack{d \in 2\IN\\d\geq 2}}
\tfrac{1}{d}
\sum_{i=1}^m x_i \cdot
\underbrace{\partial f_d/\partial x_i}_{\textup{IHL, arity 3}},
\]
where each homogeneous $f_d$ and homogeneous $\partial f_d/\partial x_i$ is computed by an IHL circuit/formula over the arity 3 basis.
Euler's homogeneous function theorem ensures that the right-hand side actually computes $f$.
We define $\VTP$ and $\VTF$ as follows:
\begin{definition}[$\VTP$ and $\VTF$] \label{def:v3f}
$\VTP$ (respectively,~$\VTF$) is the class of p-families for which
the graded IHL circuit (respectively,~formula) complexity over the arity 3 basis is polynomially bounded.
\end{definition}
We have the following inclusion among the classes:
\begin{align}
\label{eq:inclusions}
\VTF \ \subseteq \ \VF \ \subseteq \ \VTP \cap \VBP \ \subseteq \ \VP,
\end{align}
where $\VTF\subseteq\VF$ is obvious, and we prove the first inclusion in \ref{thm:VPVTP}, while it is well-known that $\VF \subseteq \VBP$.
It is known that if we go to quasipolynomial complexity instead of polynomial complexity, the three classical classes coincide: $\VQF = \VQBP = \VQP$, which is an immediate corollary of the circuit depth reduction result of Valiant-Berkowitz-Skyum-Rackoff \cite{vsbr83}.
We prove in \ref{thm:VQTFVQTP} that our two new classes also belong to this set: All classes in \eqref{eq:inclusions} coincide if we go to quasipolynomial complexity instead of polynomial complexity, see \eqref{eq:quasipolyallfive}.

The following proposition is an adaption of Brent's depth reduction \cite{brent74}
and it shows that instead of polynomially sized formulas we can work with formulas of logarithmic depth. Both properties, IHL and the arity 3 basis, require some moficiations to Brent's original argument.
\begin{proposition}[Brent's depth reduction for graded IHL formulas over the arity 3 basis]\label{pro:arityIIIbrent}
Let $f$ be a polynomial computed by a graded IHL formula of size $s$ over the arity 3 basis. Then there exists a graded IHL formula over the arity 3 basis of size $\poly(s)$ and depth $O(\log(s))$ computing $f$.
\end{proposition}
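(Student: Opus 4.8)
The strategy is to follow Brent's classical argument \cite{brent74} but to carry a more refined invariant through the recursion so that it is compatible with both the IHL restriction and the arity-3 basis. Recall that the graded IHL structure means we must separately depth-reduce each IHL formula computing a homogeneous odd-degree component $f_d$, and each IHL formula computing $\partial f_d/\partial x_i$ for even $d$; since the number of these pieces is polynomial in $s$ and $d\le s$, it suffices to depth-reduce a single IHL formula over the arity-3 basis that computes a homogeneous polynomial. So fix such a formula $F$ of size $s$ computing a homogeneous polynomial $g$ of degree $d$.

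\textbf{Key steps.} First I would recall the standard fact that any formula of size $s$ has a gate $v$ whose subformula $F_v$ has size between $s/3$ and $2s/3$ (a ``balanced separator''), obtained by walking down from the root always into the largest child. In Brent's proof one then writes $g = F_v \cdot Q + R$, where $Q = \partial F/\partial F_v$ (treating $F_v$ as a formal variable) and $R$ is $g$ evaluated at $F_v=0$; both $Q$ and $R$ are computed by formulas of size at most, say, $s$, and crucially each of them, as a ``formula-with-a-hole'', has fewer than $2s/3$ gates \emph{outside} the hole, so the recursion depth is $O(\log s)$. The two adaptations needed are: (i) \emph{IHL-compatibility}: $F_v$ computes some polynomial $h$ with $h(0)=0$, so the linear substitution ``$F_v \mapsto$ new variable $y$'' must be accompanied by using \ref{lem:inputhomogenize} to keep $Q$ and $R$ in IHL form (they are evaluated on IHL subformulas, so the composed object stays IHL; the only place a constant could appear is $R$'s constant term, which is $g(0)=0$ since $g$ is homogeneous of positive degree, and the ``leaf'' $y$ is itself homogeneous linear); (ii) \emph{arity-3-compatibility}: when a multiplication gate $u=a\cdot b\cdot c$ lies on the path to $v$, say $v$ is inside $a$, then $\partial F/\partial a$ at that gate involves the factor $b\cdot c$, which is a product of \emph{two} subformulas, not a single one — so I must insert an auxiliary arity-3 product gate $b\cdot c\cdot \ell$ for some dummy homogeneous linear form $\ell$ and then divide by $\ell$, or better, reorganize so that the ``context polynomial'' $Q$ is always presented as a genuine arity-3 IHL formula. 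Concretely, I would prove a small lemma: if $g = h\cdot r_1\cdot r_2 + (\text{lower-complexity context})$ with $h,r_1,r_2$ IHL arity-3 subformulas, one can rebalance the triple product and the surrounding context so that all product gates remain ternary; the only subtlety is parity of degrees, which is why one sometimes needs to split $h$ into the product of a homogeneous linear form and a lower-degree piece using the arity-3 structure itself. Finally, the size bound $\poly(s)$ follows exactly as in Brent's analysis, since each level of the recursion at most squares a polynomial and there are $O(\log s)$ levels, and the depth bound $O(\log s)$ follows from the constant-factor shrinkage of the ``context size'' at each step combined with the fact that each recursive gadget has constant depth.

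\textbf{Main obstacle.} The delicate point is the arity-3 bookkeeping in step (ii): in ordinary (arity-2) Brent, the context polynomial $Q$ has a clean inductive description as a formula of comparable size, but with exact arity 3 one cannot freely multiply by the constant $1$ to pad a binary product up to a ternary one without violating IHL, so every time the separator $v$ sits strictly inside one factor of a ternary product we accumulate a ``leftover binary product $b\cdot c$'' that must be absorbed. I expect the cleanest fix is to maintain the invariant that the context is always an IHL arity-3 formula \emph{times a homogeneous linear form}, i.e.\ to carry an extra linear factor as slack, and to show this invariant is preserved (it costs only a constant blow-up in size and depth per level) — the homogeneity of $g$ guarantees the degrees always match up so that such a linear-form slack can be introduced and later cancelled against another linear leaf. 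Verifying this invariant through all gate types (addition on the path, multiplication on the path with $v$ in the first/second/third factor) is the bulk of the work, but each case is a short computation of the Brent-style identity $g = h\cdot Q + R$ rewritten in the arity-3 basis.
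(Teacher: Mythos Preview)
You correctly locate the real obstacle: when the separator $v$ lies inside one factor of a ternary product $a\times b\times c$, the classical Brent quotient $\partial F/\partial F_v$ contains the bare binary product $b\cdot c$, and in the IHL arity-3 basis you cannot pad this to a ternary product with a constant~$1$. But your proposed fix --- carrying a ``linear slack form'' so that the context is always an IHL arity-3 formula times some linear $\ell$, to be ``cancelled against another linear leaf'' later --- is not a workable plan. There is no division in the formula model, and no mechanism is given to discharge the slack; if anything, a fresh slack factor would be accrued at each recursive level, so after $O(\log s)$ levels you would be computing $\ell_1\cdots\ell_{O(\log s)}\cdot g$ rather than~$g$. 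Your alternative suggestion, multiplying by a dummy $\ell$ and then ``dividing by $\ell$'', you already recognise as illegal. So as written the proposal has a genuine gap at exactly the point you flagged as the main obstacle.

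The paper's resolution is a specific trick you are missing. Instead of replacing only $\langle v\rangle$ by a fresh variable, one replaces \emph{two} of the three children of the ternary gate simultaneously, say $\langle v\rangle\mapsto\alpha$ and a sibling $\langle x\rangle\mapsto\beta$. Because $\alpha$ and $\beta$ occur only as leaves feeding that single ternary product, the whole formula $F(\alpha,\beta)$ is linear in the product $\alpha\beta$, hence
\[
F(\alpha,\beta)=\alpha\beta\bigl(F(1,1)-F(0,0)\bigr)+F(0,0).
\]
This yields a \emph{genuine} arity-3 product gate $\langle v\rangle\times\langle x\rangle\times\bigl(F(1,1)-F(0,0)\bigr)$, added to $F(0,0)$. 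Crucially, both $F(1,1)$ and $F(0,0)$ remain IHL arity-3 formulas: at the distinguished ternary gate, $1\times 1\times\langle y\rangle$ collapses to $\langle y\rangle$ (no product needed), and $0\times 0\times\langle y\rangle$ collapses to $0$ (gate removed). No slack factor is ever introduced. The recursion then proceeds with the standard Brent size recurrence $\textup{Size}(s)\le c\cdot\textup{Size}(\tfrac{2}{3}s)+O(1)$ (not ``squaring at each level'', which would not give $\poly(s)$), and depth $\textup{Depth}(s)\le\textup{Depth}(\tfrac{2}{3}s)+O(1)$.
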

\begin{proof}
We discuss only the homogeneous odd-degree case, because the more general case directly follows from it.
The construction is recursive, just as in Brent's original argument.
We follow the description in \cite{Sap19}.
We start at the root and keep picking the child with the larger subformula until we reach a vertex $v$ with $\frac 1 3 s \leq |\langle v\rangle|\leq \frac 2 3 s$, where $\langle v \rangle$ is the subformula at the gate $v$.
We make a case distinction.
In the first case we assume that on the path from $v$ to the root (excluding $v$) there is no product gate.
We reorder the gates as follows:
\[
\begin{tikzpicture}
\node at (0,0) {
\begin{tikzpicture}[scale=0.5]
\node[draw,circle,inner sep=0pt, minimum size = 0.5cm] (1) at (0,0) {+};
\node[draw,circle,inner sep=0pt, minimum size = 0.5cm] (2) at (-1,-1) {+};
\node[draw,circle,inner sep=0pt, minimum size = 0.5cm] (4) at (-3,-3) {+};
\node[inner sep=1pt] (hk) at (1,-1) {$h_k$};
\node[inner sep=1pt] (hkm1) at (0,-2) {$h_{k-1}$};
\node[inner sep=1pt] (h1) at (-2,-4) {$h_1$};
\node[inner sep=1pt] (v) at (-4,-4) {$\langle v\rangle$};
\draw (1) -- (2);
\draw[line width=1pt, line cap = round, loosely dotted] (2) -- (4);
\draw (1) -- (hk);
\draw (2) -- (hkm1);
\draw (4) -- (h1);
\draw (4) -- (v);
\end{tikzpicture}
};
\node at (3,0) {
$\longrightarrow$
};
\node at (6,0) {
\begin{tikzpicture}[scale=0.5]
\node[draw,circle,inner sep=0pt, minimum size = 0.5cm] (1) at (0,0) {+};
\node[draw,circle,inner sep=0pt, minimum size = 0.5cm] (2) at (-1,-1) {+};
\node[draw,circle,inner sep=0pt, minimum size = 0.5cm] (4) at (-3,-3) {+};
\node[inner sep=1pt] (hk) at (1,-1) {$\langle v\rangle$};
\node[inner sep=1pt] (hkm1) at (0,-2) {$h_k$};
\node[inner sep=1pt] (h1) at (-2,-4) {$h_2$};
\node[inner sep=1pt] (v) at (-4,-4) {$h_1$};
\draw (1) -- (2);
\draw[line width=1pt, line cap = round, loosely dotted] (2) -- (4);
\draw (1) -- (hk);
\draw (2) -- (hkm1);
\draw (4) -- (h1);
\draw (4) -- (v);
\end{tikzpicture}
};
\end{tikzpicture}
\]
The construction applied to a size $s$ formula gives $\textup{Depth}(s)\leq \textup{Depth}(\frac 2 3 s)+1$.
The resulting size is $\textup{Size}(s) \leq 2 \cdot \textup{Size}(\frac 2 3 s)+1$.

In the second case we assume that $v$ is the child of a product gate.
\[
\begin{tikzpicture}[scale=0.5]
\node[draw,circle,inner sep=0pt, minimum size = 0.5cm] (p2) at (6,1.5) {$\ast$};
\node[inner sep=1pt] (vnew) at (5,0) {$\langle v\rangle$};
\node[inner sep=1pt] (xnew) at (6.5,0) {$\langle x\rangle$};
\node[inner sep=1pt] (ynew) at (8,0) {$\langle y\rangle$};
\draw (p2) -- (vnew);
\draw (p2) -- (xnew);
\draw (p2) -- (ynew);
\draw[line width=1pt, line cap = round, loosely dotted] (p2) -- (7,2.5);
\end{tikzpicture}
\]
We now replace $\langle v\rangle$ by a new variable $\alpha$ and $\langle x\rangle$ by a new variable $\beta$.
We observe that the resulting polynomial $F$ (interpreted as a bivariate polynomial in $\alpha$ and $\beta$) is \emph{linear} in the product $\alpha\beta$. Therefore $F(\alpha,\beta) = \alpha\beta (F(1,1) - F(0,0)) + F(0,0)$.
Both $F(0,0)$ and $F(1,1)$ can be realized as an IHL formula over the arity 3 basis (because an arity 3 product gate with two 1s as inputs can be replaced by just the third input, and an arity 3 product gate with two 0s as input can be replaced by a constant 0, which can be simulated by removing gates), so we obtain:
\begin{equation}\label{eq:Brent}
\begin{minipage}{8cm}
\begin{tikzpicture}[scale=0.5]
\node[draw,circle,inner sep=0pt, minimum size = 0.5cm] (top) at (8,3) {+};
\node[draw,circle,inner sep=0pt, minimum size = 0.5cm] (p2) at (6,1.5) {$\ast$};
\node[draw,circle,inner sep=0pt, minimum size = 0.5cm] (vnew) at (5,0) {$+$};
\node[inner sep=1pt] (F11) at (2,-2) {$F(1,1)$};
\node[inner sep=1pt] (F00) at (6,-2) {$-F(0,0)$};
\node[inner sep=1pt] (F00r) at (10,1.5) {$F(0,0)$};
\node[inner sep=1pt] (xnew) at (6.5,0) {$\langle v\rangle$};
\node[inner sep=1pt] (ynew) at (8,0) {$\langle x\rangle$};
\draw (top) -- (F00r);
\draw (top) -- (p2);
\draw (vnew) -- (F00);
\draw (vnew) -- (F11);
\draw (p2) -- (vnew);
\draw (p2) -- (xnew);
\draw (p2) -- (ynew);
\end{tikzpicture}
\end{minipage}
\end{equation}
The construction on a size $s$ formula gives $\textup{Depth}(s)\leq \textup{Depth}(\frac 2 3 s) + 2$.
The resulting size is: $\textup{Size}(s) \leq 5 \cdot \textup{Size}(\frac 2 3 s)+3$.

In the third case we assume that on the path from from $v$ to the root (excluding $v$)
there are addition gates and then a product gate, so
\[
\begin{tikzpicture}[scale=0.5]
\node[draw,circle,inner sep=0pt, minimum size = 0.5cm] (p) at (1,1.5) {$\ast$};
\node[draw,circle,inner sep=0pt, minimum size = 0.5cm] (1) at (0,0) {+};
\node[draw,circle,inner sep=0pt, minimum size = 0.5cm] (2) at (-1,-1) {+};
\node[draw,circle,inner sep=0pt, minimum size = 0.5cm] (4) at (-3,-3) {+};
\node[inner sep=1pt] (hk) at (1,-1) {$h_k$};
\node[inner sep=1pt] (hkm1) at (0,-2) {$h_{k-1}$};
\node[inner sep=1pt] (h1) at (-2,-4) {$h_1$};
\node[inner sep=1pt] (v) at (-4,-4) {$\langle v\rangle$};
\node[inner sep=1pt] (x) at (1.5,0) {$\langle x\rangle$};
\node[inner sep=1pt] (y) at (3,0) {$\langle y\rangle$};
\draw[line width=1pt, line cap = round, loosely dotted] (p) -- (2,2.5);
\draw (p) -- (1);
\draw (p) -- (x);
\draw (p) -- (y);
\draw (1) -- (2);
\draw[line width=1pt, line cap = round, loosely dotted] (2) -- (4);
\draw (1) -- (hk);
\draw (2) -- (hkm1);
\draw (4) -- (h1);
\draw (4) -- (v);
\end{tikzpicture}
\]
As a first step we make copies of $\langle x\rangle$ and $\langle y\rangle$ and call them $\langle x'\rangle$ and $\langle y'\rangle$, respectively, and re-wire similarly as in the first case:
\[
\begin{tikzpicture}[scale=0.5]
\node[draw,circle,inner sep=0pt, minimum size = 0.5cm] (p) at (1,1.5) {$\ast$};
\node[draw,circle,inner sep=0pt, minimum size = 0.5cm] (p2) at (6,1.5) {$\ast$};
\node[draw,circle,inner sep=0pt, minimum size = 0.5cm] (1) at (0,0) {+};
\node[draw,circle,inner sep=0pt, minimum size = 0.5cm] (2) at (-1,-1) {+};
\node[draw,circle,inner sep=0pt, minimum size = 0.5cm] (4) at (-3,-3) {+};
\node[draw,circle,inner sep=0pt, minimum size = 0.5cm] (top) at (3.5,3) {+};
\node[inner sep=1pt] (hk) at (1,-1) {$h_k$};
\node[inner sep=1pt] (hkm1) at (0,-2) {$h_{k-1}$};
\node[inner sep=1pt] (h1) at (-2,-4) {$h_2$};
\node[inner sep=1pt] (v) at (-4,-4) {$h_1$};
\node[inner sep=1pt] (x) at (1.5,0) {$\langle x'\rangle$};
\node[inner sep=1pt] (y) at (3,0) {$\langle y'\rangle$};
\node[inner sep=1pt] (vnew) at (5,0) {$\langle v\rangle$};
\node[inner sep=1pt] (xnew) at (6.5,0) {$\langle x\rangle$};
\node[inner sep=1pt] (ynew) at (8,0) {$\langle y\rangle$};
\draw[line width=1pt, line cap = round, loosely dotted] (top) -- (4.5,4);
\draw (p2) -- (vnew);
\draw (p2) -- (xnew);
\draw (p2) -- (ynew);
\draw (top) -- (p);
\draw (top) -- (p2);
\draw (p) -- (1);
\draw (p) -- (x);
\draw (p) -- (y);
\draw (1) -- (2);
\draw[line width=1pt, line cap = round, loosely dotted] (2) -- (4);
\draw (1) -- (hk);
\draw (2) -- (hkm1);
\draw (4) -- (h1);
\draw (4) -- (v);
\end{tikzpicture}
\]
On the right-hand side of the tree we now proceed analogously as in the second case.
We replace $\langle v\rangle$ by a new variable $\alpha$ and $\langle x\rangle$ by a new variable $\beta$.
We observe that the resulting polynomial $F$ (interpreted as a bivariate polynomial in $\alpha$ and $\beta$) is linear in the product $\alpha\beta$. Therefore, 
\[
F(\alpha,\beta) \;=\; \alpha\beta (F(1,1) - F(0,0)) + F(0,0)\;.
\]
Both $F(0,0)$ and $F(1,1)$ can be realized as an input-homogeneous formula over the arity 3 basis, so we obtain the same formula as in \eqref{eq:Brent}.
The construction on a size $s$ formula gives $\textup{Depth}(s)\leq \textup{Depth}(\frac 2 3 s) + 2$.
The resulting size is $\textup{Size}(s) \leq 5 \cdot \textup{Size}(\frac 2 3 s)+3$.
Putting all cases together, the construction has $\textup{Depth}(s)\leq \textup{Depth}(\frac 2 3 s) + 2$ and $\textup{Size}(s) \leq 5 \cdot \textup{Size}(\frac 2 3 s)+3$.
Hence applying the construction recursively gives logarithmic depth and polynomial size.
\end{proof}

\subsection{The Parity-alternating Elementary Symmetric Polynomial}
\label{sec:parityalternating}

Let $n$ be odd. For odd $i$ let $X_i = \big(\begin{smallmatrix}0 & x_i \\ 0 & 0\end{smallmatrix}\big)$, and for even $i$ let
$X_i = \big(\begin{smallmatrix}0 & 0 \\ x_i & 0\end{smallmatrix}\big)$.
Let $A := \nc e_{n,d}(X_1,X_2,\ldots,X_n)$.
Note that in row 1 the matrix $A$ has only one nonzero entry, and its position depends on the parity of~$n$.
Let $C_{n,d} := A_{1,1}+A_{1,2}$.
A sequence $a$ of integers is called \emph{parity-alternating} if $a_i \neq a_{i+1} \mod 2$ for all $i$, and $a_1$ is odd.
Let $P$ be the set of length $d$ increasing parity-alternating sequences of numbers from $\{1,\ldots,n\}$.
It is easy to see that
\begin{equation}
\label{eq:Cnd}\textstyle
C_{n,d} = \sum_{(i_1,i_2,\ldots,i_d) \in P} \, x_{i_1}x_{i_2} \cdots x_{i_d}.
\end{equation}
We call the corresponding graded p-family $C$.
We usually only consider the case when the parities of $d$ and $n$ coincide, which is justified by the following lemma.
\begin{lemma}
If $n$ and $d$ have different parity, then
$C_{n,d} = C_{n-1,d}$.
\end{lemma}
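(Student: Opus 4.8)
The statement to prove is: if $n$ and $d$ have different parity, then $C_{n,d} = C_{n-1,d}$. The plan is to argue purely combinatorially using the explicit description of $C_{n,d}$ from \eqref{eq:Cnd} as a sum over the set $P$ of length-$d$ increasing parity-alternating sequences from $\{1,\ldots,n\}$ with odd first entry.

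First I would recall the key constraint: a length-$d$ parity-alternating sequence $(i_1,\ldots,i_d)$ with $i_1$ odd satisfies $i_j \equiv j \pmod 2$ for all $j$ (as noted in the definition of $C_{n,d}$). In particular the last entry $i_d$ has the same parity as $d$. Now I would compare the index sets for $C_{n,d}$ and $C_{n-1,d}$: both are sets of length-$d$ increasing parity-alternating sequences with odd first entry, the only difference being whether entries are drawn from $\{1,\ldots,n\}$ or from $\{1,\ldots,n-1\}$. Clearly every such sequence from $\{1,\ldots,n-1\}$ is also one from $\{1,\ldots,n\}$, giving one inclusion of monomial sets (hence $C_{n-1,d}$ is a ``subsum'' of $C_{n,d}$). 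For the reverse, I would take any valid sequence $(i_1,\ldots,i_d)$ from $\{1,\ldots,n\}$ and show it never uses the value $n$: since $i_d$ is the largest entry and $i_d \equiv d \pmod 2$, while by hypothesis $n \not\equiv d \pmod 2$, we cannot have $i_d = n$; and since the sequence is increasing, no $i_j$ equals $n$ either. Hence every valid sequence actually lies in $\{1,\ldots,n-1\}$, so the two index sets coincide and the polynomials are equal.

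The main (and essentially only) obstacle is making sure the parity bookkeeping is stated cleanly — in particular verifying that "parity-alternating with $i_1$ odd" forces $i_j \equiv j \pmod 2$, which is an immediate induction on $j$, and then correctly pinning down that the relevant parity to compare against $n$ is that of the \emph{last} index $i_d$, namely the parity of $d$. There is no real analytic or algebraic content; once the combinatorial identification of index sets is in place, equality of the polynomials \eqref{eq:Cnd} is immediate. One should also note the degenerate cases (e.g.\ $d = 0$, or $d > n$) are handled trivially since then both sides are $1$ or both sides are $0$, respectively; I would mention this in a sentence rather than belabor it.
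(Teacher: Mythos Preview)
Your proof is correct and follows essentially the same approach as the paper: both argue that the last entry $i_d$ of any valid sequence has the same parity as $d$, hence cannot equal $n$ when $n$ and $d$ have different parity, so the value $n$ is never used. The paper simply states this in two terse sentences (one for each parity of $d$), while you spell out the bookkeeping more carefully.
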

\begin{proof}
If $d$ is odd, each parity-alternating sequence always ends with an odd parity, so if $n$ is even we have $C_{n,d}=C_{n-1,d}$.
If $d$ is even, each parity-alternating sequence always ends with an even parity, so if $n$ is odd we have $C_{n,d}=C_{n-1,d}$.
\end{proof}
Analogously to \ref{cor:hardforVFH} we have the following theorem.
\begin{theorem}
\label{thm:hardness}
Recall $C$ from \eqref{eq:Cnd} and $\varphi$ from \S\oldref{subsec:homoginhomog}.
The graded p-family $C$ is $\VTF$-hard under homogeneous linear border projections,
and $\varphi(C)\subseteq\VF$.

\end{theorem}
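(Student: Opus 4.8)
\emph{Overview and plan.} I would prove the two assertions separately, in parallel to \ref{cor:hardforVFH}: $\VTF$-hardness of $C$ via a $2\times 2$ matrix-product construction from IHL formulas over the arity-3 basis, and $\varphi(C)\subseteq\VF$ via an interpolation argument on a balanced formula for iterated $2\times 2$ matrix multiplication.

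\emph{Hardness.} Let $g\in\VTF$. By \ref{pro:arityIIIbrent} each $g_n$ has a graded IHL formula over the arity-3 basis of size $\poly(n)$ and depth $\delta=O(\log n)$. Since $\deg(g)\in\Pb$, only $d\le\deg(g_n)$ matter, and $C_{N,d}\leq_{\textup{homlin}} C_{N',d}$ for $N\le N'$ (set the extra variables to $0$); so it suffices to produce, for each odd and each even $d\le\deg(g_n)$, some $N(n,d)=\poly(n)$ with $g_{n,d}\trianglelefteq_{\textup{homlin}} C_{N(n,d),d}$, and then take $m_n$ to be the maximum over $d$, which lies in $\Pb$. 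For an odd-degree piece $f=g_{n,d}$, which comes with an IHL arity-3 formula of depth $\delta$, I would mirror \ref{pro:VFKumar}/\ref{pro:usethetrace}: by induction on $\delta$, build a list of $2\times 2$ matrices over $\IC[\eps,\eps^{-1}]$, each of the alternating unipotent shape of the $X_i$ (that is, $\idtwo+(\textup{homogeneous linear form})\cdot E_{1,2}$ in odd positions and $\idtwo+(\textup{homogeneous linear form})\cdot E_{2,1}$ in even positions), such that after rescaling by a fixed power of $\eps$ and letting $\eps\to0$ the relevant entry of the product equals $f$. Addition gates are handled by concatenation exactly as in \ref{pro:VFKumar}; an IHL leaf $\ell$ contributes the single matrix $\idtwo+\ell E_{1,2}$; and the arity-3 product gate $g'\cdot h'\cdot k'$ is realized by an $\eps$-deformed identity in the spirit of the commutator computation in \ref{pro:usethetrace}, producing a block of the form $\idtwo+c\,\eps^{t}g'h'k'E_{1,2}+O(\eps^{t+1})$ with the $\eps$-exponent $t$ and the list length kept polynomial along the recursion. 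The list then has $\poly(n)$ matrices; I pad it with identity matrices (substitute $0$) to force the fixed alternating pattern defining $C$, substitute the homogeneous linear forms into the variables of $C_{N,d}$ with $N=\poly(n)$, and finally homogenize degree by degree (replacing $\eps$ by $\eps^{d}$ and absorbing the rescaling scalar into a $d$-th root, as at the end of \ref{cor:hardforVFH}) to obtain $f\trianglelefteq_{\textup{homlin}} C_{N,d}$. An even-degree piece is $g_{n,d}=\tfrac1d\sum_i x_i\cdot\partial g_{n,d}/\partial x_i$ by Euler's identity, and each $\partial g_{n,d}/\partial x_i$ is odd-degree with an IHL arity-3 formula; the multiplication by the fresh variable $x_i$ and the sum over $i$ are absorbed by prepending one further matrix factor and concatenating the lists, again staying within $\poly(n)$ matrices. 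This yields $g\trianglelefteq_{\textup{p-homlin}} C$, i.e.\ $C$ is $\VTF$-hard under homogeneous linear border projections.

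\emph{The inclusion $\varphi(C)\subseteq\VF$.} A typical member is $n\mapsto\sum_{i=0}^{d_n}a_{n,i}C_{m_n,i}$ with $d,m\in\Pb$. The key observation is that $\sum_i C_{N,i}t^i$ equals the sum of the $(1,1)$ and $(1,2)$ entries of $\prod_{i=1}^{N}(\idtwo+tX_i)$, a product of $N$ unipotent $2\times 2$ matrices; computing all four entries of a product of $N$ matrices of constant size by balanced divide-and-conquer obeys $S(N)\le 4S(N/2)+O(1)$, so each entry has a formula of size $O(N^2)$. Extracting the linear combination $\sum_i a_{n,i}C_{m_n,i}$ is then a Lagrange interpolation in $t$ at $\le m_n+1$ points with constant coefficients (the required coefficients solve an invertible Vandermonde system), which multiplies the size by $O(m_n)$. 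Hence $\sum_i a_{n,i}C_{m_n,i}$ has formula size $O(m_n^3)=\poly(n)$, so $\varphi(C)\subseteq\VF$.

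\emph{Main obstacle.} The delicate step is the product-gate case of the $2\times 2$ construction. Unlike the $3\times 3$ setting of \ref{pro:VFKumar}, the position $(1,2)$ cannot be split as $(1,k)\circ(k,2)$, so there is no exact unipotent identity realizing a product of subformulas; one is forced into the border regime and must design an $\eps$-deformed arity-3 identity whose leading term is a clean $g'h'k'E_{1,2}$ while keeping both the $\eps$-exponent and the number of matrix factors polynomial through the recursion. This is exactly the subtlety of circuits over the arity-3 basis flagged after \ref{def:v3f}; the remainder is bookkeeping parallel to \ref{cor:hardforVFH}.
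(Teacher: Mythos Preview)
Your high-level plan matches the paper's, and your treatment of $\varphi(C)\subseteq\VF$ is fine (indeed more detailed than the paper, which simply notes that $C_{n,d}$ is the degree-$d$ part of the $2\times2$ product $(\idtwo+X_1)\cdots(\idtwo+X_n)$). The gap is precisely where you flag it: the product-gate step, and relatedly the even-degree step.

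\emph{Odd degree.} Saying the arity-3 product is ``realized by an $\eps$-deformed identity \ldots producing $\idtwo+c\,\eps^{t}g'h'k'E_{1,2}+O(\eps^{t+1})$ with $t$ kept polynomial'' is exactly the nontrivial content, and it does not follow from the commutator in \ref{pro:usethetrace}. Two missing ideas make the paper's construction work:
\begin{itemize}
\item First, the paper does \emph{not} handle a general arity-3 product directly. It rewrites each arity-3 product as a signed sum of four cubes via the polarization identity $xyz=\tfrac{1}{24}\big((x{+}y{+}z)^3-(x{+}y{-}z)^3-(x{-}y{+}z)^3+(x{-}y{-}z)^3\big)$ and thereby reduces to formulas with addition and \emph{negative cube} gates $x\mapsto -x^3$.
\item Second, the induction is carried with an extra formal parameter $\alpha$ (so linear forms live in $\IC[\eps,\eps^{-1},\alpha]$), with hypothesis $\alpha f\cdot E_{\textup{odd}}\simeq(\idtwo+\ell_1E_{\textup{odd}})\cdots(\idtwo+\ell_rE_{\textup{odd}})-\idtwo$. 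At a negative cube gate one (i) boosts the $\eps$-degree of the error by $\eps\to\eps^k$, then (ii) specializes $\alpha$ to $\pm\eps^{-1}$ and to $\eps^2\alpha$ to manufacture the three factors in the key identity
\[
(\idtwo+\eps^{-1}fE_{\textup{odd}})(\idtwo+\eps^{2}\alpha fE_{\textup{even}})(\idtwo-\eps^{-1}fE_{\textup{odd}})\ \simeq\ \idtwo-\alpha f^3E_{\textup{odd}}.
\]
Without the $\alpha$-device, substituting negative powers of $\eps$ into an induction hypothesis that only holds ``$\simeq$'' destroys the error control; this is why your claim that $t$ stays polynomial is unjustified as written.
\end{itemize}

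\emph{Even degree.} ``Prepending one matrix factor and concatenating'' does not work: multiplying $\idtwo+x_iE_{\textup{even}}$ into a block that carries $\partial f/\partial x_i$ at $E_{\textup{odd}}$ does not place $x_i\,\partial f/\partial x_i$ cleanly at any fixed entry, and concatenation only sums entries at the \emph{same} off-diagonal position. The paper instead uses a four-factor commutator to put $x_i\cdot\tfrac1d\partial f/\partial x_i$ on the \emph{diagonal} modulo $\eps^3$, namely
\[
(\idtwo-\eps aE_{\textup{odd}})(\idtwo-\eps bE_{\textup{even}})(\idtwo+\eps aE_{\textup{odd}})(\idtwo+\eps bE_{\textup{even}})\equiv\begin{pmatrix}1+\eps^2ab&0\\0&1-\eps^2ab\end{pmatrix}\pmod{\eps^3},
\]
and then multiplies these diagonal blocks so that the $(1,1)$-entries add modulo $\eps^3$; Euler's identity then recovers $f$. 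You need this (or an equivalent device), not a single prepended factor.

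In short: your outline is right, but the proof hinges on the cube-gate reduction and the auxiliary $\alpha$-parameter trick for the odd case, and on the diagonal commutator construction for the even case; these are the substantive steps you still have to supply.
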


\begin{proof}
Let $\idtwo$ denote the $2 \times 2$ identity matrix.
$\varphi(C)\subseteq \VF$ follows from the fast that $C_{n,d}$ is the homogeneous degree $d$ component of the product $(\textup{id}_2+X_1)\cdots(\textup{id}_2+X_n)$ of $2\times 2$ matrices.

We prove $\VTF$-hardness.
Given a logdepth formula for a homogeneous degree $d$ polynomial $f$.
Let $E_{\textup{odd}} = \big(\begin{smallmatrix}
0&1\\
0&0
\end{smallmatrix}\big)
$
and let
$E_{\textup{even}} = \big(\begin{smallmatrix}
0&0\\
1&0
\end{smallmatrix}\big)
$.
We are given a formula for a homogeneous degree $d$ polynomial~$f$.
We can assume that the gates are additions and negative cubes ($x \mapsto -x^3$), because
$xyz=\frac{1}{24}\big(
(x+y+z)^3-(x+y-z)^3-(x-y+z)^3+(x-y-z)^3
\big)$, and the rescalings by $(\pm 24)^{-\frac 1 3}$ can be pushed to the input gates.
We first treat the case of $d$ being odd.
We write $A \simeq B$ is $A$ and $B$ are parametrized by $\eps$ and both limits $\lim_{\eps\to 0}A$ and $\lim_{\eps\to 0}B$ exist and coincide with each other.
We prove by induction on the depth $D$ of a gate that
there exist $\leq 3^D$ homogeneous linear forms $\ell_1,\ldots,\ell_r$ over $\IC[\eps,\eps^{-1},\alpha]$
such that
\[
\alpha f \cdot E_{\textup{odd}} 
\simeq
(\idtwo+\ell_1 E_{\textup{odd}})
(\idtwo+\ell_2 E_{\textup{even}})
\cdots
(\idtwo+\ell_r E_{\textup{odd}})
-\idtwo %
\]

The induction starting at an input gate with label $\ell$
is done by $\ell_1 = \alpha\ell$.
The addition gate is handled as follows.
By induction hypothesis there exist $\ell_1,\ldots,\ell_r$ and $\ell'_1,\ldots,\ell'_{r'}$
with
\[
\alpha f \cdot E_{\textup{odd}}
+ \idtwo
\simeq (\idtwo+\ell_1 E_{\textup{odd}})
(\idtwo+\ell_2 E_{\textup{even}})
\cdots
(\idtwo+\ell_r E_{\textup{odd}})
\qquad \textup{and}
\]
\[
\alpha g \cdot E_{\textup{odd}}
+ \idtwo
\simeq
(\idtwo+\ell'_1 E_{\textup{odd}})
(\idtwo+\ell'_2 E_{\textup{even}})
\cdots
(\idtwo+\ell'_{r'} E_{\textup{odd}})
\]
Therefore $\alpha (f+g) \cdot E_{\textup{odd}}
+ \idtwo = 
(\alpha f \cdot E_{\textup{odd}}
+ \idtwo)
(\alpha g \cdot E_{\textup{odd}}
+ \idtwo) \simeq
$
\[
(\idtwo+\ell_1 E_{\textup{odd}})
(\idtwo+\ell_2 E_{\textup{even}})
\cdots
(\idtwo+\ell_r E_{\textup{odd}})
(\idtwo+\ell'_1 E_{\textup{odd}})
(\idtwo+\ell'_2 E_{\textup{even}})
\cdots
(\idtwo+\ell'_{r'} E_{\textup{odd}})
\]

Handling the negative cube gates is more subtle (the negative squaring gates are also the subtle cases in \cite{BIZ18}).
By induction hypothesis we have $\ell_1,\ldots,\ell_r$ such that
\begin{equation}\label{eq:indhyp}
\alpha f \cdot E_{\textup{odd}} \simeq (\idtwo+\ell_1 E_{\textup{odd}})
(\idtwo+\ell_2 E_{\textup{even}})
\cdots
(\idtwo+\ell_r E_{\textup{odd}})
-\idtwo %
\end{equation}
We replace each $\eps$ by $\eps^k$ in each $\ell_i$, with $k$ so large that even when we replace $\alpha$ by $\eps^{-1}$ or $-\eps^{-1}$, we still have the equivalence of the LHS and RHS mod $\eps^2$.

We call the resulting linear forms $\ell_i'$. It follows that
\[
\alpha f \cdot E_{\textup{odd}} \equiv \big((\idtwo+\ell'_1 E_{\textup{odd}})
(\idtwo+\ell'_2 E_{\textup{even}})
\cdots
(\idtwo+\ell'_r E_{\textup{odd}})
-\idtwo\big)  \pmod{\eps^k}
\]
Setting $\alpha$ to $\eps^{-1}$ we obtain
\[
\eps^{-1} f \cdot E_{\textup{odd}} \equiv \big((\idtwo+\ell''_1 E_{\textup{odd}})
(\idtwo+\ell''_2 E_{\textup{even}})
\cdots
(\idtwo+\ell''_r E_{\textup{odd}})
-\idtwo\big)  \pmod{\eps^2}
\]
Anaogously with $\alpha = -\eps^{-1}$:
\[
-\eps^{-1} f \cdot E_{\textup{odd}} \equiv \big((\idtwo+\tilde\ell''_1 E_{\textup{odd}})
(\idtwo+\tilde\ell''_2 E_{\textup{even}})
\cdots
(\idtwo+\tilde\ell''_r E_{\textup{odd}})
-\idtwo\big)  \pmod{\eps^2}
\]
The induction hypothesis \eqref{eq:indhyp} also implies (set $\eps$ to $\eps^3$ and $\alpha$ to $\eps^2\alpha$) that
\[
\eps^2 \alpha f \cdot E_{\textup{odd}} \equiv \big((\idtwo+\ell'''_1 E_{\textup{odd}})
(\idtwo+\ell'''_2 E_{\textup{even}})
\cdots
(\idtwo+\ell'''_r E_{\textup{odd}})
-\idtwo\big) \pmod{\eps^3}
\]
Transposing gives
\[
\eps^2 \alpha f \cdot E_{\textup{even}} \equiv \big((\idtwo+\ell'''_r E_{\textup{even}})
(\idtwo+\ell'''_{r-1} E_{\textup{odd}})
\cdots
(\idtwo+\ell'''_1 E_{\textup{even}})
-\idtwo\big) \pmod{\eps^3}
\]
We now observe:
\[
(\eps^{-1}f E_{\textup{odd}} + \idtwo + \eps^2 g_1)
(\eps^{2}\alpha f E_{\textup{even}} + \idtwo  + \eps^3 g_2)
(-\eps^{-1}f E_{\textup{odd}} + \idtwo  + \eps^2 g_3)
\simeq
-\alpha f^3 E_{\textup{odd}} + \idtwo.
\]

\begin{center}
\begin{tikzpicture}[scale=1.2]
\node at (-3,0.5) {Pictorially:};
\node[circle,fill=black,inner sep=0pt, minimum size = 0.5cm,minimum size=5pt] (00) at (0,0) {};
\node[circle,fill=black,inner sep=0pt, minimum size = 0.5cm,minimum size=5pt] (01) at (0,1) {};
\node[circle,fill=black,inner sep=0pt, minimum size = 0.5cm,minimum size=5pt] (10) at (1,0) {};
\node[circle,fill=black,inner sep=0pt, minimum size = 0.5cm,minimum size=5pt] (11) at (1,1) {};
\node[circle,fill=black,inner sep=0pt, minimum size = 0.5cm,minimum size=5pt] (20) at (2,0) {};
\node[circle,fill=black,inner sep=0pt, minimum size = 0.5cm,minimum size=5pt] (21) at (2,1) {};
\node[circle,fill=black,inner sep=0pt, minimum size = 0.5cm,minimum size=5pt] (30) at (3,0) {};
\node[circle,fill=black,inner sep=0pt, minimum size = 0.5cm,minimum size=5pt] (31) at (3,1) {};
\node at (0.5,-0.2) {\scalebox{0.75}{$+O(\eps^2)$}};
\node at (1.5,-0.2) {\scalebox{0.75}{$+O(\eps^3)$}};
\node at (2.5,-0.2) {\scalebox{0.75}{$+O(\eps^2)$}};
\node at (3.5,0.5) {$=$};
\node[circle,fill=black,inner sep=0pt, minimum size = 0.5cm,minimum size=5pt] (40) at (4,0) {};
\node[circle,fill=black,inner sep=0pt, minimum size = 0.5cm,minimum size=5pt] (41) at (4,1) {};
\node[circle,fill=black,inner sep=0pt, minimum size = 0.5cm,minimum size=5pt] (50) at (5,0) {};
\node[circle,fill=black,inner sep=0pt, minimum size = 0.5cm,minimum size=5pt] (51) at (5,1) {};
\draw (00) -- (30);
\draw (01) -- (31);
\draw (01) -- (10) node[midway,fill=white] {$\varepsilon^{-1}f$};
\draw (10) -- (21) node[midway,fill=white] {$\varepsilon^{2}\alpha f$};
\draw (21) -- (30) node[midway,fill=white] {$-\varepsilon^{-1}f$};
\draw (40) -- (50);
\draw (41) -- (51);
\draw (41) -- (50) node[midway,fill=white] {$-\alpha f^3$};
\node at (4.5,-0.2) {\scalebox{0.75}{$+O(\eps)$}};
\end{tikzpicture}
\end{center}
At the end, setting $\alpha=1$ we obtain
\[
\alpha f \cdot E_{\textup{odd}} \;\simeq\; (\idtwo+\ell_1 E_{\textup{odd}})
(\idtwo+\ell_2 E_{\textup{even}})
\cdots
(\idtwo+\ell_r E_{\textup{odd}})
-\idtwo.
\]
Observe that $r$ is only polynomially large, because we started with a formula of logarithmic depth.
Since $f$ is homogeneous of degree $d$, this implies
\[
f \ \simeq \ \nc e_{r,d}(\ell_1 E_{\textup{odd}},\ell_2 E_{\textup{even}},\cdots,\ell_r E_{\textup{odd}})_{1,2} \ = \ C_{r,d}(\ell_1,\ldots,\ell_r).
\]

We now treat the case where $f$ has even degree, using an argument similar to the one form \ref{pro:usethetrace}.
By the above construction, for each $i$ we find
\[
\alpha(\tfrac{1}{d}\partial f/\partial x_i) \cdot E_{\textup{odd}} \simeq (\idtwo+\ell_{i,1} E_{\textup{odd}})
(\idtwo+\ell_{i,2} E_{\textup{even}})
\cdots
(\idtwo+\ell_{i,r_i} E_{\textup{odd}})
-\idtwo.
\]
We replace all $\eps$ by $\eps^3$, replace all $\alpha$ by $\eps$,
and lastly add $\idtwo$:
\[
\eps (\tfrac{1}{d}\partial f/\partial x_i) \cdot E_{\textup{odd}} 
+\idtwo
\equiv \big((\idtwo+\ell'_{i,1} E_{\textup{odd}})
(\idtwo+\ell'_{i,2} E_{\textup{even}})
\cdots
(\idtwo+\ell'_{i,r_i} E_{\textup{odd}})
\big) \pmod{\eps^3}.
\]
Analogously, when replacing $\alpha$ by $-\eps$ instead:
\[
-\eps (\tfrac{1}{d}\partial f/\partial x_i) \cdot E_{\textup{odd}} 
+\idtwo
\equiv \big((\idtwo+\ell''_{i,1} E_{\textup{odd}})
(\idtwo+\ell''_{i,2} E_{\textup{even}})
\cdots
(\idtwo+\ell''_{i,r_i} E_{\textup{odd}})
\big) \pmod{\eps^3}.
\]
We also find corresponding linear forms for the transposes.
Now observe that for any polynomials $a,b$ we have
\[
(-\eps a \cdot E_{\textup{odd}} + \idtwo + O(\eps^3))(-\eps b \cdot E_{\textup{even}} + \idtwo + O(\eps^3))
(\eps a \cdot E_{\textup{odd}} + \idtwo + O(\eps^3))(\eps b \cdot E_{\textup{even}} + \idtwo + O(\eps^3))
\]
\[
\equiv \begin{pmatrix}
1+ \eps^2 a\cdot b & 0 \\
0 & 1 - \eps^2 a\cdot b
\end{pmatrix} \pmod{\eps^3}.
\]
\begin{center}
\begin{tikzpicture}[scale=1.1]
\node at (-3,1.5) {Pictorially:};
\node[circle,fill=black,inner sep=0pt, minimum size = 0.5cm,minimum size=5pt] (01) at (0,1) {};
\node[circle,fill=black,inner sep=0pt, minimum size = 0.5cm,minimum size=5pt] (02) at (0,2) {};
\node[circle,fill=black,inner sep=0pt, minimum size = 0.5cm,minimum size=5pt] (11) at (1,1) {};
\node[circle,fill=black,inner sep=0pt, minimum size = 0.5cm,minimum size=5pt] (12) at (1,2) {};
\node[circle,fill=black,inner sep=0pt, minimum size = 0.5cm,minimum size=5pt] (21) at (2,1) {};
\node[circle,fill=black,inner sep=0pt, minimum size = 0.5cm,minimum size=5pt] (22) at (2,2) {};
\node[circle,fill=black,inner sep=0pt, minimum size = 0.5cm,minimum size=5pt] (31) at (3,1) {};
\node[circle,fill=black,inner sep=0pt, minimum size = 0.5cm,minimum size=5pt] (32) at (3,2) {};
\node[circle,fill=black,inner sep=0pt, minimum size = 0.5cm,minimum size=5pt] (41) at (4,1) {};
\node[circle,fill=black,inner sep=0pt, minimum size = 0.5cm,minimum size=5pt] (42) at (4,2) {};
\node[circle,fill=black,inner sep=0pt, minimum size = 0.5cm,minimum size=5pt] (51) at (5,1) {};
\node[circle,fill=black,inner sep=0pt, minimum size = 0.5cm,minimum size=5pt] (52) at (5,2) {};
\node[circle,fill=black,inner sep=0pt, minimum size = 0.5cm,minimum size=5pt] (61) at (6,1) {};
\node[circle,fill=black,inner sep=0pt, minimum size = 0.5cm,minimum size=5pt] (62) at (6,2) {};

\node (c) at (4.5,1.5) {$=$};
\draw (01) -- (41);
\draw (02) -- (42);
\draw (51) -- (61);
\draw (52) -- (62) node[inner sep=1.5pt,midway,above] {\scalebox{0.75}{$1+\eps^2 ab$}};
\draw (51) -- (61) node[inner sep=1.5pt,midway,above] {\scalebox{0.75}{$1-\eps^2 ab$}};

\draw (02) -- (11) node[inner sep=1.5pt,midway,fill=white] {$-\eps a$};
\draw (11) -- (22) node[inner sep=1.5pt,midway,fill=white] {$-\eps b$};
\draw (22) -- (31) node[inner sep=1.5pt,midway,fill=white] {$\eps a$};
\draw (31) -- (42) node[inner sep=1.5pt,midway,fill=white] {$\eps b$};
\node at (0.5,0.6) {\scalebox{0.75}{$+O(\eps^3)$}};
\node at (1.5,0.6) {\scalebox{0.75}{$+O(\eps^3)$}};
\node at (2.5,0.6) {\scalebox{0.75}{$+O(\eps^3)$}};
\node at (3.5,0.6) {\scalebox{0.75}{$+O(\eps^3)$}};
\draw (51) -- (61) node[inner sep=0.3cm,midway,below] {\scalebox{0.75}{$+O(\eps^3)$}};
\end{tikzpicture}
\end{center}

Let $M(c) := \begin{pmatrix}
1+ \eps^2 c & 0 \\
0 & 1 - \eps^2 c
\end{pmatrix}$.
Now note that
\[
(M(a_1 b_1)+O(\eps^3)) \cdots (M(a_n b_n)+O(\eps^3)) \equiv M(a_1 b_1 + a_2 b_2 + \cdots a_n b_n) \pmod{\eps^3}.
\]
Setting $a_i = x_i$ and $b_i = \frac 1 {d}\partial f/\partial x_i$,
and using Euler's homogeneous function theorem,
we obtain polynomially many linear forms $\ell_1,\ldots,\ell_r$ so that
\[
M(f) \equiv \big((\idtwo+\ell_1 E_{\textup{odd}})
(\idtwo+\ell_2 E_{\textup{even}})
\cdots
(\idtwo+\ell_r E_{\textup{even}})
\big) \pmod{\eps^3}
\]
Subtracting $\idtwo$ on both sides and taking the degree $d$ homogeneous part of the $(1,1)$ entry:
\[
\eps^2 f \ \equiv \ \underbrace{\nc e_{r,d}(\ell_1 E_{\textup{odd}},\ell_2 E_{\textup{even}},\cdots,\ell_r E_{\textup{even}})_{1,1}}_{=C_{r,d}(\ell_1,\ldots,\ell_r)} \pmod{\eps^3}
\]

We replace all $\eps$ by $\eps^{d/2}$, to get~$\eps^d f \ \,\equiv\, C_{r,d}(\ell'_1,\ldots,\ell'_r) \,\pmod{\eps^{ 3d/2}}$.
Therefore, $f \ \simeq \ C_{r,d}(\eps^{-1}\cdot \ell'_1,\ldots,\eps^{-1}\cdot \ell'_r)$. Both cases together prove that $C_{n,d}$ is $\VTF$-hard under homogeneous linear border projections.
The $\VQP$-hardness under quasipolynomial homogeneous linear border projections now follows from \ref{thm:VQTFVQTP}.
\end{proof}

\subsection{Converting Formulas to Circuits Over the Arity 3 Basis}
\label{subsec:convaritythree}
In this section we prove the following theorem.

\begin{theorem}
\label{thm:VPVTP}
$\VF \subseteq \VTP$.
\end{theorem}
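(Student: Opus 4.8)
The plan is to take a polynomially-sized formula for a p-family $(f_n)$ and produce a polynomially-sized graded IHL circuit over the arity-$3$ basis. The first step is normalization: by \ref{lem:inputhomogenize} and \ref{cor:VPVf} we may assume each $f_n = f_n(0) + \widehat f_n$ with $\widehat f_n$ computed by an IHL formula of polynomial size, so it suffices to handle a homogeneous-free IHL formula. Next I would \emph{homogenize} the formula in the standard way: replace each gate $g$ by the tuple $(g^{(0)}, g^{(1)}, \ldots, g^{(d)})$ of its homogeneous components up to the top degree $d = \deg(f_n)$, which is polynomial in $n$; this blows up the size only by a $\poly(d)$ factor, and because the original formula is IHL, each degree-$0$ component is forced to be $0$, so every $g^{(e)}$ is a homogeneous polynomial of degree exactly $e$ with no constant terms appearing anywhere. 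The point of homogenizing first is that I can now work degree-by-degree and, crucially, \emph{parity-homogenize}: group the components into odd-degree and even-degree parts.

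The heart of the argument is converting the arity-$2$ product gates into arity-$3$ product gates while preserving the IHL property — this is exactly the subtlety flagged in the discussion of \S\oldref{subsec:aritythree}, since a naive simulation of $g \cdot h$ as $g \cdot h \cdot 1$ violates IHL. For the \emph{odd-degree} components $f_d$ with $d$ odd, the trick is the one used in the proof of \ref{thm:hardness}: a product of two homogeneous polynomials $g\cdot h$ of odd degrees $a,b$ (so $a+b$ is even — this is where I need to be careful, since two odd-degree things multiply to even degree) must be combined with a third odd-degree factor to stay in the arity-$3$ world. Concretely, I would introduce the new variable $z$ and compute $z \cdot f_d$ at even-degree gates, so that every product gate sees three odd-degree inputs; the extra factor of $z$ is the standard device, and it is exactly why the output is a \emph{circuit} rather than a formula (one reuses the subcircuit computing the ``$z$-padded'' version rather than recopying it, and the final $z$ is substituted/removed at the root by a depth-reduction-free rewiring of edge labels, using the IHL$^+$ freedom to rescale). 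For \emph{even-degree} components one uses the partial-derivative encoding built into the definition of a graded IHL circuit over the arity-$3$ basis: $f_d = \frac1d \sum_i x_i \,\partial f_d/\partial x_i$ by Euler, and each $\partial f_d/\partial x_i$ is homogeneous of odd degree $d-1$, so it falls under the odd-degree construction; the derivatives of a formula are computable by a circuit of polynomial size (the classical Baur–Strassen / syntactic-derivative construction), and this again forces a circuit, not a formula, since a single subformula is differentiated with respect to many variables.

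The main obstacle I expect is bookkeeping the arity-$3$ conversion so that \emph{every} product gate genuinely has three inputs, each of which is a nonzero homogeneous polynomial of the right parity, with no stray constant-$1$ inputs and no constant leaves — in other words, making the ``$z$-trick'' propagate consistently through an arbitrary tree of arity-$2$ additions and products. One has to track, for each gate, whether it currently computes $f_d$ or $z\cdot f_d$, handle the three cases (two odd inputs to a product, two even, or mixed) of how the $z$ factors line up, and insert correction gates so that the parities always match; addition gates are easy but product gates require a small case analysis exactly analogous to the one in \ref{pro:arityIIIbrent} and \ref{thm:hardness}. Once that local gadget is in place the global bound is routine: the size multiplies by $\poly(d) = \poly(n)$ from homogenization, by a constant from the arity-$3$ gadgets, and by $\poly(n)$ from the derivative construction in the even-degree case, so the total remains polynomial and $(f_n) \in \VTP$.
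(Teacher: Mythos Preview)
Your proposal is correct and matches the paper's proof almost exactly: parity-homogenize (the paper only splits into odd/even rather than fully homogenizing, but either works), introduce the auxiliary variable $z$ at even-parity gates, handle the three product cases (odd$\times$odd, odd$\times$even, even$\times$even), and treat even-degree outputs via Euler and partial derivatives of odd degree. One small correction: the $z$ is not ``removed at the root by rescaling edge labels'' --- it is eliminated \emph{during} the product-case analysis by feeding the odd-degree subcircuit into every $z$-leaf of the even-degree subcircuit, and this substitution (not reuse of the padded version) is precisely what creates fan-out and forces the result to be a circuit rather than a formula.
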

\begin{proof}
Let $h \in \VF$, i.e., by Brent's depth reduction, $h$ has formulas of polynomial size and logarithmic depth.
We treat the homogeneous components $f$ of $g_n$ independently.
If $f$ is of even degree, observe that if $f$ has a formula of depth $\delta$, then $\partial f/\partial x_i$ has a formula of depth $2\delta$ (by induction, using the sum and product rules of derivatives), which by \ref{lem:inputhomogenize} implies the existence of an IHL formula of depth $O(\delta)$ (note that $\partial f/\partial x_i$ is homogeneous of odd degree). Now we apply the odd-degree argument below for each partial derivative independently.

Let $f$ be of odd degree.
As a first step we convert the IHL formula into an IHL formula for which at each gate either all even homogeneous components vanish or all odd homogeneous components vanish. The construction is similar to the \ref{lem:inputhomogenize} and works as follows.
We replace each gate $v$ by two gates $v_\textup{odd}$ and $v_\textup{even}$, where at $v_\textup{even}$ the sum of the even degree components is computed,
and at $v_\textup{odd}$ the sum of the odd degree components is computed.
Let $f = f_\textup{even}+_\textup{odd}$ be the decomposition of $f$ into the even homogeneous parts and the odd homogeneous parts.
$\big((f+g)_\textup{even},(f+g)_\textup{odd}\big) = (f_\textup{even}+g_\textup{even},f_\textup{odd}+g_\textup{odd})$ so a sum gate is replaced by two sum gates.
Moreover,
$\big((f\cdot g)_\textup{even},(f \cdot g)_\textup{odd}\big) = (f_\textup{even}\cdot g_\textup{even} + f_\textup{odd}\cdot g_\textup{odd},f_\textup{even}\cdot g_\textup{odd}+f_\textup{odd}\cdot g_\textup{even})$,
so a product gate is replaced by 4 product gates and 2 summation gates. Here we use that the depth was logarithmic.

We now convert such a formula to an IHL circuit with the same number of gates, but over the arity 3 basis.
This part is a bit subtle, and therefore we do it more formally below.
We replace each even degree gate $v$ that computes $g$ with a gate that computes $z \cdot g$, where $z$ is a dummy variable.
Addition gates are not changed. For product gates there are three cases.
\begin{itemize}
 \item A product gate $v$ of two odd-degree polynomials $f$ and $g$. By induction we have an IHL circuit over the arity 3 basis for $f$ and for $g$.
 We construct the arity 3 product $z \times f \times g$.
  \item A product gate $v$ that has an odd-degree polynomial $f$ at its child $w$, and that has an even-degree polynomial $g$ at its child $u$. By induction we have IHL circuits $C$ and $D$ over the arity 3 basis for $f$ and for $z g$, respectively.
  We take $C$ and $D$, delete all instances of $z$ in $D$, and feed there the output of $C$ instead. The resulting circuit computes $fg$.
  \item A product of an even-degree polynomial $f$ and an even-degree polynomial $g$. By induction we have IHL circuits $C$ and $D$ over the arity 3 basis for $zf$ and for $zg$, respectively. We take $C$ and $D$, delete all instances of $z$ in $D$, and feed there the output of $C$ instead. The resulting circuit computes $zfg$.
\end{itemize}
The size of the resulting circuit is less or equal to the size of the formula (even though the depth can increase in this construction).
\end{proof}

\begin{remark}
\label{rem:VFVTF}
Even when starting with a formula of logarithmic depth, the resulting circuit does not necessarily have logarithmic depth, hence we do not obtain $\VF=\VTF$. This is because in the second bullet point we rearrange the circuit structure when we replace $z$.
\end{remark}

\begin{remark}
We also do not get $\VP=\VTP$, because note that the replacements of $z$ in the second and third bullet point can only be done, because in a formula the outdegree of each gate is at most~1, i.e., we do not reuse computation results. After we replace $z$ by $f$ in a subcircuit that computes $zg$, the original subcircuit computing $zg$ will be gone and cannot be reused.
\end{remark}

\subsection{Valiant-Skyum-Berkowitz-Rackoff Over the Arity 3 Basis}
\label{subsec:VSBR}

\begin{theorem}\label{thm:VQTFVQTP}
$\VQTF = \VQTP$.
\end{theorem}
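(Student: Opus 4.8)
The plan is to show both inclusions, the nontrivial one being $\VQTP \subseteq \VQTF$, since $\VTF \subseteq \VTP$ was established in \ref{thm:VPVTP} and carries over verbatim to the quasipolynomial regime. By \eqref{eq:inclusions} together with the already-known facts $\VQF = \VQBP = \VQP$, it suffices to prove $\VQTP \subseteq \VQP$ on one side and $\VQP \subseteq \VQTF$ on the other; chaining these with $\VQTF \subseteq \VQF = \VQP$ and $\VQP = \VF\text{-quasi} \subseteq \VTP\text{-quasi} = \VQTP$ (from $\VF \subseteq \VTP$) closes the cycle. So concretely I would prove: (i) $\VQTP \subseteq \VQP$, and (ii) $\VF \subseteq \VQTF$ — the latter being the real content, since combined with $\VQF = \VQP$ and the trivial quasipolynomial blow-up it upgrades the polynomial-size formula depth-reduction output into a quasipolynomial-size arity-3 IHL \emph{formula}.

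For (i), a graded IHL circuit over the arity 3 basis is in particular an arithmetic circuit (after absorbing the $\tfrac1d \sum_i x_i \partial f_d/\partial x_i$ wrapper and the constant $f(0)$, which costs only polynomially many extra gates and is covered by \ref{cor:VPVf}), so $\VTP \subseteq \VP$; taking quasipolynomial size bounds gives $\VQTP \subseteq \VQP$ directly.

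For (ii), the idea is to run an arity-3-basis variant of the Valiant-Skyum-Berkowitz-Rackoff depth reduction \cite{vsbr83}. Starting from a polynomial-size arity-3 IHL circuit for a homogeneous degree-$d$ component $f$ of $g_n$, VSBR produces a circuit of depth $O(\log^2 d)$ and size $\poly(n,d)$; unfolding an $O(\log^2 d)$-depth circuit into a formula incurs only a $\poly(n,d)^{O(\log d)} = n^{\polylog(n)}$ size blow-up, i.e.\ quasipolynomial, which is exactly what $\VQTF$ allows. The subtlety — and the main obstacle — is that the standard VSBR argument repeatedly forms gates of the shape ``constant'' or ``a gate times $1$'', and multiplies partial computations together using arity-2 products with one input frozen to a field constant; none of this is legal in the arity-3 IHL setting. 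I would handle this exactly as Brent's reduction was handled in \ref{pro:arityIIIbrent}: work separately with the parity-homogenized circuit so every gate is purely even or purely odd, carry along the auxiliary dummy variable $z$ for even-degree gates as in \ref{thm:VPVTP}, and implement the VSBR ``$[v : w]$''-type quotient computations (the polynomial computed at $v$ with the subtree at an ancestor $w$ replaced by a fresh variable, which is linear in that variable) by the identity $F(\alpha) = \alpha(F(1) - F(0)) + F(0)$ already used in \eqref{eq:Brent}. An arity-3 product with two inputs set to $1$ collapses to its third input, and one with a $0$ input collapses to $0$, so both $F(0)$ and $F(1)$ remain realizable as arity-3 IHL objects; substituting an honest sub\emph{formula} (rather than a shared subcircuit) back in for the frozen variable is what forces the output to be a formula and is responsible for the quasipolynomial rather than polynomial bound. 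I expect the bookkeeping around which frozen-variable substitutions are ``linear'' in the arity-3 basis — and verifying that the degree/parity constraints propagate correctly through the three product cases of \ref{thm:VPVTP} inside the recursive VSBR layering — to be the delicate part; the size and depth recursions themselves are routine once the gadgets are in place.

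Putting the pieces together: $\VQTF \subseteq \VQF = \VQP = \VF\text{-quasi} \subseteq \VTP\text{-quasi} = \VQTP$ gives one inclusion, and $\VQTP \subseteq \VQP \subseteq \VQTF$ (the last step being (ii), quasipolynomially scaled) gives the other, hence $\VQTF = \VQTP$, and in fact all five classes in \eqref{eq:inclusions} coincide quasipolynomially, yielding \eqref{eq:quasipolyallfive}.
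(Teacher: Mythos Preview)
Your high-level strategy --- perform an arity-3 IHL variant of VSBR to reach depth $O(\log^2)$, then unfold to a formula at quasipolynomial cost --- matches the paper exactly. But your proposed implementation of the arity-3 VSBR does not work as written.

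The identity $F(\alpha)=\alpha(F(1)-F(0))+F(0)$ from \eqref{eq:Brent} is not a depth-reduction device: it merely expresses a polynomial linear in $\alpha$ via two evaluations, but $F(0)$ and $F(1)$ still carry the \emph{original} circuit depth. In \ref{pro:arityIIIbrent} that identity is invoked once per level of Brent's tree-splitting recursion, where the size drop comes from isolating a heavy subtree; VSBR's recursion is driven by \emph{degree}, and its engine is the frontier decomposition $u\equiv\sum_{w\in\mathcal F_m}[u{:}w]_{\langle w\rangle}$, not linearization. Your parity-homogenization and dummy-$z$ bookkeeping are also misplaced: in an IHL circuit over the arity-3 basis every gate already computes an odd-degree polynomial (leaves have degree~1, and odd$+$odd$+$odd is odd), so there are no even-degree intermediate gates to manage. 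The real obstacle --- which you do not address --- is that standard VSBR produces \emph{arity-2} products of the shape $[u{:}w]\cdot w$, and these must be rewritten as arity-3 products without introducing constants. The paper does this structurally: since each frontier gate is itself $w=w_1\times w_2\times w_3$, one has $[u{:}w]_{\langle w\rangle}\equiv[u{:}w]_{\langle w_3\rangle}\times w_2\times w_1$, an honest arity-3 product, after reordering the children so that the smallest-degree one is the substitution target (this reordering is essential for the degree bound and is the step that goes beyond \cite{vsbr83}). The recursion for $[u{:}v]$ needs a second nested frontier, yielding arity-5 products that are then simulated by two arity-3 gates. You would need to supply this frontier-based construction rather than transplanting the Brent gadgets.
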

\begin{proof}
The entire argument is over the arity 3 basis and each homogeneous component is treated separately.
Given a size $s$ circuit that computes an odd-degree polynomial, we use \ref{thm:VSBR} below to obtain a circuit of size $\poly(s)$ and depth $O(\log^2(s))$ that computes the same polynomial.
We unfold the circuit to a formula of the same depth.
The size is hence $3^{O(\log^2(s))}=s^{O(\log s)}$.
If $s=n^{\polylog(n)}$, then $s^{O(\log s)} = n^{\polylog(n)}$ \footnote{$
(n^{\log^i(n)})^{\log^j(n^{\log^i(n)})} = n^{\log^{i+ij+j}(n)}
$
}.
The even-degree case is done by treating each partial derivative independently.
\end{proof}

Since we know that $\VQF = \VQBP = \VQP$ and $\VQTF = \VQF = \VQTP$,
the situation of \eqref{eq:inclusions} simplifies:
\begin{equation}\label{eq:quasipolyallfive}
\VQTF = \VQF = \VQBP = \VQP = \VQTP.
\end{equation}

The following \cref{thm:VSBR} is needed in the proof of \ref{thm:VQTFVQTP}. It lifts the classical Valiant-Skyum-Berkowitz-Rackoff \cite{vsbr83} circuit depth reduction to the arity 3 basis. The argument is an adaption of the original argument.

\begin{theorem}[VSBR depth reduction for IHL circuits over the arity 3 basis]\label{thm:VSBR}
Let $f$ be a polynomial computed by a graded IHL circuit of size $s$ over the arity 3 basis, $\deg(f)=d$. Then there exists a graded IHL circuit over the arity 3 basis of size $O(\poly(s))$ and depth $O(\log(s)\cdot\log d)$ computing $f$.
\end{theorem}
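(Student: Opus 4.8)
The plan is to run the Valiant--Skyum--Berkowitz--Rackoff depth reduction \cite{vsbr83}, in the formulation of \cite{Sap19}, inside the arity-$3$ IHL setting. As in the proof of \ref{thm:VPVTP}, it suffices to treat a single homogeneous component of \emph{odd} degree $d$: a homogeneous component of even degree $d'$ is stored in the graded form as $\tfrac{1}{d'}\sum_i x_i\,\partial f_{d'}/\partial x_i$, each $\partial f_{d'}/\partial x_i$ is homogeneous of odd degree $d'-1$ and is, by the definition of a graded IHL circuit, given by an IHL circuit over the arity-$3$ basis, so the odd case applies to it directly; reassembling the depth-reduced pieces under the (permitted) top-level graded structure costs only an extra $O(\log s)$ in depth.

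First I would homogenize without leaving the arity-$3$ IHL world. Every gate of an IHL circuit over the arity-$3$ basis computes a polynomial with zero constant term all of whose nonzero homogeneous parts have odd degree (true for leaves, preserved by additions, and preserved by ternary products since a sum of three odd numbers is odd). Splitting each gate $v$ into its homogeneous parts $v^{(1)},v^{(3)},\dots,v^{(d)}$, a ternary product $v=v_1v_2v_3$ becomes $v^{(k)}=\sum_{a+b+c=k}v_1^{(a)}v_2^{(b)}v_3^{(c)}$; since each $v_i$ has zero constant term only summands with $a,b,c\ge 1$ survive, so each summand is again a ternary product of non-constant homogeneous polynomials, while additions split componentwise. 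This yields an equivalent \emph{homogeneous} IHL circuit over the arity-$3$ basis of size $O(s d^2)$.

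Next I would set up the VSBR calculus: for gates $g,h$ define $[g]$ and $[g{:}h]$ by the usual recursion, restricting to pairs with $\deg h>\tfrac12\deg g$, so that in each ternary product at most one child can ``contain'' $h$; this is what makes $[g{:}h]$ a well-defined partial-derivative-type quantity and gives a frontier identity $[g]=\sum_h [g{:}h]\cdot[h_1]\cdot[h_2]\cdot[h_3]$ summed over ternary product gates $h=h_1h_2h_3$ all of whose children have degree $\le\tfrac12\deg g$, together with the analogous recursion for $[g{:}h]$. Choosing all thresholds by halving keeps every degree and every gap $\deg g-\deg h$ at least $1$, so the atoms at the bottom of the recursion are degree-$1$ homogeneous polynomials, i.e.\ homogeneous linear forms, and no constant leaf or constant-plus-nonconstant addition ever appears. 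Each identity shrinks the relevant degree (or gap) by a factor of about $2$, so after $O(\log d)$ nested product layers the recursion terminates; realizing every ``$\sum$'' by a balanced binary $+$-tree of depth $O(\log s)$ then gives total depth $O(\log s\cdot\log d)$, and since only $O(sd^2)$ gates $g$ and $O((sd^2)^2)$ pairs $(g,h)$ occur, each built once, the size stays $O(\poly(s))$. The scalars produced by VSBR go on the edges, which IHL circuits permit.

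The main obstacle --- and the only genuinely new ingredient compared with \cite{vsbr83} --- is the interaction of \emph{arity exactly $3$} with IHL. The frontier identities as written produce products of \emph{four} objects ($[g{:}h]$ times the three children of a ternary frontier gate), whereas an arity-$3$ product multiplies exactly three, and a product of two non-constant polynomials cannot be realized by a single arity-$3$ gate because the missing ``$1$'' is not an IHL leaf. I expect to resolve this exactly as \ref{pro:arityIIIbrent} resolves the corresponding difficulty for Brent's reduction: replace a frontier gate (and, where necessary, one further subexpression) by fresh variables $\alpha,\beta$, use that the quantity being computed is affine-linear in the product $\alpha\beta$ --- so it equals $\alpha\beta\,(F(1,1)-F(0,0))+F(0,0)$ --- and use that an arity-$3$ product degenerates to its remaining input when fed two $1$'s and to $0$ when fed a $0$, in order to fold the superfluous fourth factor into the others and re-express everything over the arity-$3$ basis. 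Checking that this reorganization preserves polynomial size and $O(\log d)$ product-depth, and that the degenerate cases never smuggle a constant leaf back in, is where the argument departs most from the classical one; the remaining bookkeeping is identical to \cite{vsbr83,Sap19}.
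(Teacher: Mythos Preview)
Your reduction to the odd-degree case and the preliminary homogenization are both fine. The gap is in the arity/parity handling. In a homogeneous IHL arity-$3$ circuit every gate has \emph{odd} degree, so the classical VSBR quantity $[g{:}h]$, of degree $\deg g-\deg h$, is \emph{even}. An IHL arity-$3$ circuit cannot have a gate computing a nonzero even-degree polynomial --- that is precisely why even degrees are handled via partial derivatives in the graded formalism --- so ``each pair $(g,h)$ built once'' is impossible as stated, and the same parity obstruction recurs in the ``analogous recursion for $[g{:}h]$''. The Brent $\alpha\beta$ trick from \ref{pro:arityIIIbrent} does not rescue the four-factor product $[g{:}h]\cdot h_1\cdot h_2\cdot h_3$ either: in Brent's setting the trick works because substituting $\alpha=\beta=1$ collapses a \emph{ternary} gate to its remaining input, leaving a smaller legitimate IHL arity-$3$ formula; here, setting any two of the four factors to $1$ leaves a \emph{binary} product of the remaining two, which is exactly the operation the basis forbids, so $F(1,1)-F(0,0)$ is itself not computable in the model.

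The paper resolves this by baking a dummy variable $z$ into the definition, taking $[u{:}u]=z$ rather than $1$, so that $[u{:}v]$ is homogeneous linear in $z$ and has \emph{odd} degree $\deg u-\deg v+1$. Writing a frontier gate as $w=w_1\times w_2\times w_3$, one substitutes a single child for $z$ and obtains the genuine arity-$3$ product $[u{:}w]_{\langle w_3\rangle}\times w_2\times w_1$ of three odd-degree objects; choosing $w_3$ to be the smallest-degree child and taking the threshold $m=\lceil\tfrac23\deg u\rceil$ (not $\tfrac12$) makes all three factors have degree at most $\lceil\tfrac23\deg u\rceil$. For the gap quantities $[u{:}v]$ a second frontier expansion on one of the three factors gives five odd-degree factors, realized by two nested arity-$3$ gates. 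This $z$-substitution device is the missing idea; with it the remaining bookkeeping is standard VSBR.
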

\begin{proof}
We adapt the proof from \cite{Sap19}.
We treat only the homogeneous odd case, because all summands can be treated independently, and in the even degree case we can treat each partial derivative independently.
We work entirely over the arity 3 basis (and hence compute a polynomial whose even degree homogeneous parts all vanish), so every circuit and subcircuit is over the arity 3 basis, and every product is of arity 3.

A circuit whose root is an arity 3 product gate is denoted by $x \times y \times z$. A circuit whose root is an arity 2 addition gate is denoted by $x+y$, just as usual.
Notationally, we use the same notation for gates, for their subcircuits, and for the polynomials they compute.
If we want to specifically highlight that we talk about the circuit with root $w$, then we write $\langle w \rangle$.
We write $v \leq u$ is $v$ is contained in the subcircuit with root $u$.
We write $C \equiv C'$ to denote that the circuits $C$ and $C'$ compute the same polynomial.

Let $z$ be a new dummy variable.
Let the circuit $[u:v]$ be defined via
$[u:v] := z$ if $u=v$, and if $u\neq v$ we have
\[
[u:v]
\ := \ 
\begin{cases}
0 & \text{ if $u$ is a leaf}
\\
[u_1:v] + [u_2:v]& \text{ if } u = u_1+ u_2
\\
[u_1:v] \times u_2 \times u_3 &
\begin{minipage}[t]{7.5cm}if $u = u_1 \times u_2 \times u_3$ and $u_1$ has the highest degree among $\{[u_1],[u_2],[u_3]\}$
\end{minipage}
\end{cases}
\]
It can be seen by induction that $[u:v]$ is zero or a homogeneous polynomial of degree $\deg u - \deg v + 1$,
and $[u:v]$ is zero or is homogeneous linear in~$z$.
If $w \not\leq u$, then $[u:w]=0$.
For a circuit $C$ we write $[u:v]_{C} := [u:v](z \leftarrow C)$, where $\leftarrow$ means that all leaves labelled $z$ are replaced by the output of the circuit $C$.

We define a set of gates that is called the \emph{$m$-frontier} $\mathcal F_m$ via\\
$
\mathcal F_m := \{u \mid u = u_1 \times u_2 \times u_3 \ \textup{ with} \ \deg u_1, \deg u_2, \deg u_3 \leq m \ \textup{ and} \ \deg(u) > m\}
.
$
\begin{lemma}\label{lem:usum}
Fix a pair $(u,m)$ with $\deg u> m$.
Let $\mathcal F := \mathcal F_m$. Then
$
u \equiv \sum_{w \in \mathcal F} [u:w]_{\langle w\rangle}.
$
\end{lemma}
\begin{proof}
For the proof we fix $m$ and do induction on the \emph{depth} of $u$, i.e., the position of $u$ in any fixed topological ordering of the gates.
Since for every gate $u$ with $\deg(u)>m$ there exists some gate $u'\in\mathcal F \cap \langle u \rangle$, the induction start is the case $u \in \mathcal F$.
In this case, since $\mathcal F$ is an antichain, it follows that 
$\sum_{w \in \mathcal F}[u:w] = 0+[u:u] = z$, and hence $\sum_{w \in \mathcal F}[u:w]_{\langle w\rangle} = [u:u]_{\langle u \rangle} = z_{\langle u \rangle} = u$.
This proves that case $u \in \mathcal F$.
Now, let $u \notin \mathcal F$.
If $u$ is an addition gate:
\begin{align*}
u
&=&
u_1 + u_2
 \ \stackrel{\textup{I.H.}}{\equiv} \ 
\sum_{w \in \mathcal F} [u_1:w]_{\langle w\rangle} + \sum_{w \in \mathcal F} [u_2:w]_{\langle w\rangle}
h\ \equiv \ 
\sum_{w \in \mathcal F}\bigg( [u_1:w]_{\langle w\rangle} + [u_2:w]_{\langle w\rangle}\bigg)
\\
&=&
\sum_{w \in \mathcal F}\bigg( [u_1:w] + [u_2:w]\bigg)_{\langle w\rangle}
\ \stackrel{\textup{Def.}}{=} \ 
\sum_{w \in \mathcal F}[u:w]_{\langle w\rangle}
\end{align*}

If $u$ is a multiplication gate, note that $u \notin \mathcal F$, so one of the children has degree $>m$ (w.l.o.g.\ that child is called $u_1$):
\begin{align*}
u
&=&
u_1 \times u_2 \times u_3
\ \stackrel{\textup{I.H.}}{\equiv} \ 
\left(\sum_{w \in \mathcal F} [u_1:w]_{\langle w\rangle}\right)
\times
u_2 \times u_3
\  \equiv \ 
\sum_{w \in \mathcal F}\bigg( [u_1:w]_{\langle w\rangle}\times u_2 \times u_3\bigg)
\\
&=&
\sum_{w \in \mathcal F}\bigg( [u_1:w]\times u_2\times u_3\bigg)_{\langle w\rangle}
\ \stackrel{\textup{Def.}}{=} \ 
\sum_{w \in \mathcal F}[u:w]_{\langle w\rangle}
\end{align*}

\vspace{-0.7cm}
\end{proof}

\begin{lemma}\label{lem:uvsum}
Fix a pair $(u,m,v)$ with $\deg u > m \geq \deg v$.
Let $\mathcal F := \mathcal F_m$.
\[
[u:v] \equiv \sum_{w \in \mathcal F} [u:w]_{[w:v]}.
\]
\end{lemma}
\begin{proof}
For the proof we fix $m$ and $v$ and do induction on the \emph{depth} of $u$, i.e., the position of $u$ in any fixed topological ordering of the gates.
Since for every gate $u$ with $\deg(u)>m$ there exists some gate $u'\in\mathcal F \cap \langle u \rangle$, the induction start is the case $u \in \mathcal F$.
In this case, since $\mathcal F$ is an antichain, it follows that $\sum_{w \in \mathcal F}[u:w]_{[w:v]} \equiv z_{[u:v]} = [u:v]$.
This proves that case $u \in \mathcal F$.
Now, let $u \notin \mathcal F$.
Since $\deg u>m$ and $m\geq \deg v$ we have $u \neq v$.
If $u$ is an addition gate:

\begin{align*}
[u:v]
&\stackrel{\textup{Def.\ $(u\neq v)$}}{=}&
[u_1:v] + [u_2:v]
\ \stackrel{\textup{I.H.}}{\equiv} \ 
\sum_{w \in \mathcal F} [u_1:w]_{[w:v]} + \sum_{w \in \mathcal F} [u_2:w]_{[w:v]}
\\
&\equiv&
\sum_{w \in \mathcal F}\bigg( [u_1:w]_{[w:v]} + [u_2:w]_{[w:v]}\bigg)
\ = \ 
\sum_{w \in \mathcal F}\bigg( [u_1:w] + [u_2:w]\bigg)_{[w:v]}
\\
&\stackrel{\clap{\scriptsize\textup{Def.}}}{=}&
\sum_{w \in \mathcal F}[u:w]_{[w:v]}
\end{align*}
If $u$ is a multiplication gate, note that $u \notin \mathcal F$, so one of the children has degree $>m$ (w.l.o.g.\ that child is called $u_1$):

\begin{align*}
[u:v]
&\stackrel{\textup{Def.\ $(u\neq v)$}}{=}&
[u_1:v] \times u_2 \times u_3
 \ \stackrel{\textup{I.H.}}{\equiv} \ 
\left(\sum_{w \in \mathcal F} [u_1:w]_{[w:v]}\right)
\times
u_2\times u_3
\\
&\equiv&
\sum_{w \in \mathcal F}\bigg( [u_1:w]_{[w:v]}\times u_2\times u_3\bigg)
\ = \ 
\sum_{w \in \mathcal F}\bigg( [u_1:w]\times u_2\times u_3\bigg)_{[w:v]}
\\
&\stackrel{\clap{\scriptsize\textup{Def.}}}{=}&
\sum_{w \in \mathcal F}[u:w]_{[w:v]}
\end{align*}

\vspace{-0.7cm}
\end{proof}
We now construct the shallow circuit so that the degree of each child in a multiplication gate decreases from $\delta$ to $\lceil \frac 2 3 \delta \rceil$, so the multiplication depth (i.e., the number of multiplications on a path from leaf to root) is at most $O(\log d)$.
Here we allow arity 5 multiplication gates. These can be simulated by two arity 3 multiplication gates.
We construct the circuit by induction on the degree, and we construct it in a way that each $u$ and each $[u:w]_{\langle v\rangle}$ are computed at some gate, so the size of the resulting circuit is at most $O(s^3)$.
The addition gates between the multiplications can be balanced, so that we have at most $O(\log s)$ depth in each addition tree.
This gives a total depth of $\log d \cdot \log s$.
\subsection{The construction for {\textit{u}}.}

\begin{align*}
u & \;\stackrel{\textup{Lem.}~\oldref{lem:usum}}{\equiv} \;\sum_{w\in\mathcal{F}}[u:w]_{\langle w\rangle}\; =\; \sum_{w\in\mathcal{F}}[u:w]_{\langle w_{1}\rangle}\times w_{2}\times w_{3}\ \\
 & \;=\; \sum_{\substack{w\in\mathcal{F} \\ \deg(u)\geq\deg(w)}}[u:w]_{\langle w_{1}\rangle}\times w_{2}\times w_{3}  \;\equiv\;\sum_{\substack{w\in\mathcal{F} \\ \deg(u)\geq\deg(w)}}[u:w]_{\langle w_{3}\rangle}\times w_{2}\times w_{1}
\end{align*}

This explicit rearrangement of $w_1$ and $w_3$ is necessary and goes beyond \cite{vsbr83}.
Choose
$m = \lceil\frac 2 3 \deg u\rceil$. Recall $\deg w_i \leq m$, so we already have two of the three cases: $\deg w_1 \leq \lceil\frac 2 3 \deg u\rceil$ and $w_2 \leq \lceil\frac 2 3 \deg u\rceil$.
But we also know $\deg(u)\geq \deg(w)=\deg(w_1)+\deg(w_2)+\deg(w_3)$, hence w.l.o.g.\ $\deg(w_3)\leq \lfloor\frac 1 3\deg(u)\rfloor$.
Therefore
$\deg u - \deg w + \deg w_3 \leq \lfloor\frac 4 3\rfloor \deg u - \underbrace{\deg w}_{>m} < \frac 2 3 \deg u$.

\subsection{The construction for [\textit{u:v}].}
We use fractions and ``$\cdot$'' multiplication signs when we do not have a circuit implementation in the intermediate equalities on polynomials. We write $w = w_1 \times w_2 \times w_3$ for $w \in \mathcal F$.
\begin{align*}
[u:v]
&\stackrel{\textup{Lem.}~\oldref{lem:uvsum}}{\equiv}
\sum_{w \in \mathcal F} [u:w]_{[w:v]}
\ = \ 
\sum_{\substack{w \in \mathcal F \\ \deg(u)\geq\deg(w)}} \frac{[u:w]}{z} \cdot [w:v] \\ &
 =  
\frac 1 z \sum_{\substack{w \in \mathcal F \\ \deg(u)\geq\deg(w)}} [u:w] \cdot [w_1:v] \cdot w_2 \cdot w_3
\equiv
\sum_{\substack{w \in \mathcal F \\ \deg(u)\geq\deg(w)}} [u:w]_{\langle w_3\rangle} \times [w_1:v] \times w_2
\\
 &\stackrel{\clap{\scriptsize\textup{Lem.}~\oldref{lem:usum}}}{\equiv}
\sum_{\substack{w \in \mathcal F \\ \deg(u)\geq\deg(w)}} [u:w]_{\langle w_3\rangle} \times [w_1:v] \times
\left(
\sum_{\substack{y \in \mathcal F' \\ \deg(w_2)\geq\deg(y)}}
[w_2:y]_{\langle y_3\rangle}
\times
y_2
\times
y_1
\right)
\\
&\equiv
\sum_{\substack{w \in \mathcal F \\ \deg(u)\geq\deg(w)}}
\sum_{\substack{y \in \mathcal F' \\ \deg(w_2)\geq\deg(y)}}
[u:w]_{\langle w_3\rangle} \times [w_1:v] \times
\big(
[w_2:y]_{\langle y_3\rangle}
\times
y_2
\times
y_1
\big)
\end{align*}
We set $m=\lceil\frac 2 3 (\deg u+\deg v)\rceil$ and $m'=\lceil\frac 2 3 \deg w_2\rceil$.
We calculate the degrees of the five factors:
\begin{itemize}
\item $\deg u\!-\!\deg w\!+\!\deg w_3 \leq (\deg u - \deg w) + \lfloor\frac 1 3 \deg u\rfloor \leq \lfloor\frac 4 3 \deg u\rfloor - m \leq \lceil\frac 2 3 (\deg u - \deg v)\rceil$
\item $\deg w_1 - \deg v +1 \leq \deg w_1 \leq m \leq \lceil\frac 2 3 (\deg u-\deg v)\rceil$
\item $\deg w_2 - \deg y + \deg y_3 \leq \lfloor\frac 4 3 \deg w_2\rfloor - \lceil\frac 2 3 \deg w_2\rceil \leq \lceil\frac 2 3 \deg w_2\rceil \leq \lceil\frac 2 3 (\deg u - \deg v)\rceil$
\item $\deg y_2 \leq \lceil\frac 2 3 \deg w_2\rceil \leq \lceil\frac 2 3 (\deg u - \deg v)\rceil$, and analogously for $\deg y_1$.
\end{itemize}
The rescaling constants on the edges can be set in the straightforward way.
\end{proof}

\end{document}